\theoremstyle{plain}
\newtheorem{theorem}{Theorem}[section]
\newtheorem{definition}{Definition}[section]
\newtheorem{proposition}[theorem]{Proposition}
\newtheorem{lemma}[theorem]{Lemma}
\newtheorem{assumption}[theorem]{Assumption}
\newtheorem{example}[theorem]{Example}
\newtheoremstyle{remm}%
{\topsep}
{\topsep}
{\sffamily}
{}
{\bfseries}
{}
{ }
{}
\theoremstyle{remm}
\newtheorem{rem}[theorem]{Remark}
\newenvironment{remark}
{\begin{rem}\setlength{\hangindent}{30 pt}}
{\end{rem}}
\newcommand{\lb}{\left\{}
\newcommand{\rb}{\right\}}
\newcommand{\Def}{\overset{\text{def}}{=}}
\newcommand{\R}{\mathbb{R}}
\newcommand{\Z}{\mathbb{Z}}
\newcommand{\N}{\mathbb{N}}
\newcommand{\KK}{K}
\newcommand{\Err}{\mathcal{E}}
\newcommand{\err}{\textsc{\tiny E}}
\newcommand{\vrho}{\varrho}
\newcommand{\vkap}{\varkappa}
\newcommand{\eps}{\varepsilon}
\newcommand{\Borel}{\mathscr{B}}
\newcommand{\Pspace}{\mathscr{P}}
\newcommand{\BP}{\mathbb{P}}
\newcommand{\BE}{\mathbb{E}}
\newcommand{\filt}{\mathscr{F}}
\newcommand{\gilt}{\mathscr{G}}
\newcommand{\Xsp}{\mathsf{X}}
\newcommand{\calI}{\mathcal{I}}
\newcommand{\Prem}{\textbf{P}^{\text{prem}}}
\newcommand{\Prot}{\textbf{P}^{\text{prot}}}
\newcommand{\rate}{\textsf{R}}
\newcommand{\PTimes}{\mathcal{T}}
\newcommand{\ff}{\mathfrak{f}}
\newcommand{\tL}{\bar L}
\newcommand{\idio}{\text{I}}
\newcommand{\system}{\text{S}}
\newcommand{\CS}{\mathcal{S}}
\newcommand{\Zn}{Z^{(n)}}
\newcommand{\fI}{\mathfrak{I}}
\newcommand{\granup}{\lceil N\alpha \rceil}
\newcommand{\grandn}{\lfloor N\alpha \rfloor}
\newcommand{\CA}{\mathcal{A}}
\newcommand{\CM}{\mathcal{M}}
\begin{document}
\title{Exact Pricing Asymptotics of Investment-Grade Tranches of Synthetic
CDO's Part I: A Large Homogeneous Pool}

\author{Richard B. Sowers}
\address{Department of Mathematics\\
    University of Illinois at Urbana--Champaign\\
    Urbana, IL 61801}
\email{r-sowers@illinois.edu}
\date{\today. Submitted.}

\thanks{The author would thank to Professor Tomasz Bielecki of the Illinois
Institute of Technology for inviting him to attend the Conference on Credit Risk
held at the University of Chicago in the Fall of 2007.  The author would also
like to thank Professor Neil Pearson of the University of Illinois at Urbana-Champaign for his time in explaining CDO's.}

\begin{abstract} We use the theory of large deviations to study the pricing
of investment-grade tranches of synthetic CDO's.  In this paper, we consider a
simplified model which will allow us to introduce some of the concepts and calculations. 
\end{abstract}

\maketitle

\section{Introduction}
It has been difficult to read the recent financial news without finding mention of Collateralized Debt Obligations (CDO's).  These financial instruments
provide ways of aggregating risk from a large number of sources and reselling
it in a number of parts, each part having different risk-reward characteristics.
Notwithstanding the role of CDO's in the recent market meltdown, the near
future will no doubt see the financial
engineering community continuing to develop structured investment
vehicles like CDO's.  Unfortunately, computational challenges in this area are formidable.
The main types of these assets have several common problematic features:
\begin{itemize}
\item they pool a large number of assets
\item they tranche the losses.
\end{itemize}
The ``problematic'' nature of this combination is that the trancheing procedure is nonlinear;
and as is usual, the effect of a nonlinear transformation on a high-dimensional
system is often difficult to understand.  Ideally, one would like a theory which gives,
if not explicit answers, at least some guidance.  Lacking theory, one is
often forced to search for models which are computationally
feasible, structurally robust, and which can be reasonably well-fitted to data.

We here consider a \emph{large deviations} (cf. \cite{MR1739680, MR1619036, MR758258}) analysis of certain aspects of synthetic CDO's.
The theory of large deviations is a collection of ideas which are
often useful in studying rare events.  The rare events of interest here involve losses in
(and hence pricing of)
investment-grade (senior or super-senior) tranches of synthetic CDO's.
We would like to see how far we can take a rigorous analysis when we use
mathematical tools, viz., large deviations, which are designed expressly to study rare events.
The theory of large deviations usually gives a very refined
analysis of rare events (more refined, for example, than one based on mean-variance calculations); what does this analysis look like for CDO's?

In the course of our analysis, we will see that
large deviations theory provides a natural framework for studying large amounts
of idiosyncratic randomness.  Moreover, the theory of large deviations
provides a way to compare rare events and see how they transform.
We believe this to be an important component of a larger
analysis of CDO's, particularly in cases where correlation comes from only a few sources (we will pursue a simple form of this idea in Subsection \ref{S:Correlated}).  In a sequel to this paper we will consider the more challenging case of a heterogeneous pool of assets.

This is not the first attempt to apply large deviations to structured finance.
Losses in pools of large assets like CDO's have been considered in \cite{MR2022976}, \cite{Glasserman}\footnote{Glasserman in \cite{Glasserman} makes important headway in understanding correlation.}, and \cite{MR2384674} (see also \cite{MR1644496} for another application of large deviations to finance).  Moreover, effects of tranching have been considered in
\cite{Vielex} and \cite{YangHurdZhang}, both of which discuss saddlepoint
effects of tranching once the distribution of the loss process is known.
Our interest is to identify, as much as possible, exact asymptotic formulae
for the price of the CDO by focussing on the effects of large amounts
of idiosyncratic randomness.  We find that if we interpret
the loss process as an occupation measure, \emph{Sanov's theorem}
suggests how to proceed.
Furthermore, it allows us to develop something of a bottom-up analysis
which directly connects the CDO price to the default probababilities of the underlying bonds.  It also naturally leads to a number of calculations which reflect the
\emph{dynamics} of the default probabilities (as opposed
to a snapshot of the default probabilities at expiry).

Finally, the \emph{ab initio} nature of our calculations bears note\footnote{See in particular Remark \ref{R:Comments} and the comments at the beginning of Section \ref{S:HAS}.}.
A number of models, such as the generalized Poission loss model \cite{Brigo}, the Hawkes process \cite{Giesecke} and others
(cf. \cite{Caflisch, FilOverSch}), which successfully capture some of the complexity of CDO's have been developed and implemented.
Our approach is limited to investment-grade tranches, and hopefully will complement some of these models and contribute to their study.

\section{CDS to CDO---a Review}
A standard review of credit default swaps and synthetic CDO's will help us fix notation, which comes from \cite{Brigo}.
Let's fix underlying probability triple $(\Omega,\filt,\BP)$, where $\BP$ represents the risk-neutral probability measure and $\BE$ is the associated expectation operator..

\subsection{Credit Default Swaps}\label{S:CDS}
A Credit Default Swap (CDS) is a contract between a \emph{protection seller}
and a \emph{protection buyer} based on the default of a reference bond
(a \emph{name}).  Under the contract, the protection seller pays the protection buyer $\$1$
(the \emph{notional}) when the bond defaults\footnote{We assume for simplicity no recovery.} (a nonnegative random time $\tau$), as long as this
default occurs before\footnote{We require default to be \emph{strictly} before expiry; that will save us some calculations resulting from potentially positive probability of default exactly at expiry.} the expiry of the contract (time $T$).  This is the
\emph{protection leg} of the contract.  In return, the protection buyer pays the
protection seller a premium $S$ at a finite collection $\PTimes$ of times (such that $t\le T$ for all $t\in \PTimes$ until
the default occurs.  This is the \emph{premium} leg of the contract; see Figure \ref{fig:CDS}.
\begin{figure}
\includegraphics[scale=0.8]{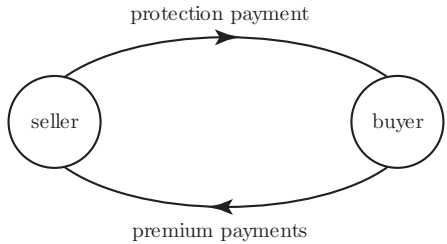}
\caption{Credit Default Swap}
\label{fig:CDS}
\end{figure}
To write this mathematically, define the loss process
\begin{equation*} L^\circ_t \Def \chi_{\{t\ge \tau\}} = \begin{cases} 1 &\text{if $t\ge \tau$} \\ 0 &\text{if $t<\tau$}\end{cases}\end{equation*}
for all $t\in \R$ (of course then $L^\circ_t=0$ for $t<0$).
The present value of the protection and premium legs are thus
\begin{gather*}e^{-\rate \tau}\chi_{\{\tau< T\}} = \int_{s\in [0,T)} e^{-\rate s}dL^\circ_s \\
S\sum_{t\in \PTimes} e^{-\rate t}\chi_{\{\tau>t\}}= S\sum_{t\in \PTimes}e^{-\rate t}\{1-\chi_{\{\tau\le t\}}\}= S\sum_{t\in \PTimes} e^{-\rate t}\left(1-L^\circ_t\right) \end{gather*}
where $\rate$ is the riskless interest rate\footnote{It is not difficult to see that
the maps $\omega\mapsto e^{-\rate \tau(\omega)}\chi_{\{\tau(\omega)\le T\}}$ and $\omega\mapsto \sum_{t\in \PTimes} e^{-\rate t}\chi_{\{\tau(\omega)>t\}}$ are measurable maps from $\Omega$ to $\R$; thus the expectations make sense.}.  The value of $S$ is defined by requiring that the expectation of these two legs agree (under the risk-neutral measure).

\subsection{Synthetic CDO's}

It is an easy step to modify this notation to construct a synthetic CDO.  Consider $N$ credit default swaps (each one on a different name).  Each CDS has notional value $1/N$, and the default of the $n$-th name occurs at a random nonnegative time $\tau_n$.  The \emph{notional} loss process is thus
\begin{equation*} L^{(N)}_t \Def \frac{1}{N}\sum_{n=1}^N \chi_{\{\tau_n\le t\}} \end{equation*}
for all $t\in \R$ (as in our above discussion of credit default swaps, $L^{(N)}_t=0$ for $t<0$).
Note that $0\le L^{(N)}_t\le 1$ for all $t\ge 0$.  Fix \emph{attachment} and \emph{detachment} points $\alpha$ and $\beta$ in $[0,1]$ such that $\alpha<\beta$.  We then define the \emph{tranched} loss process $\tL^{(N)}$ as
\begin{equation*} \tL^{(N)}_t \Def \frac{(L^{(N)}_t-\alpha)^+-(L^{(N)}_t-\beta)^+}{\beta-\alpha} = \begin{cases} 0 &\text{if $L^{(N)}_t<\alpha$} \\
\frac{L^{(N)}_t-\alpha}{\beta-\alpha} &\text{if $\alpha\le L^{(N)}_t\le \beta$} \\
1 &\text{if $L^{(N)}_t\ge \beta$}\end{cases} \end{equation*}
for all $t\in \R$.  The protection and premium legs of a synthetic CDO are basically given by replacing the loss process $L^\circ$ in a credit default swap with $\tL$.
Namely, define
\begin{equation*}\Prot_N \Def \int_{s\in [0,T)} e^{-\rate s}d\tL^{(N)}_s \qquad\text{and}\qquad
\Prem_N\Def \sum_{t\in \PTimes} e^{-\rate t}\left(1-\tL^{(N)}_t\right); \end{equation*}
$S_N\Prem_N$ is the present value of the premium leg (where $S_N$ 
are the premiums) and $\Prot_N$ is the present value of the protection leg.
The protection leg thus makes payments when defaults occur, as long as at
least $\alpha$ (in percent) of the names have already defaulted, and only
as long as no more than $\beta$ (in percent) of the names have defaulted.
These payments are proportioned so that they add up to at most $\$1$.  The
premium payments, on the other hand, are made only on the proportion
of names which are still insured (i.e., which have not yet defaulted).
The premium $S_N$ should then be given by equating the risk-neutral expectation of two legs; i.e.,
\begin{equation}\label{E:SN} S_N= \frac{\BE[\Prot_N]}{\BE[\Prem_N]}. \end{equation}
Note that $L^{(N)}_t$ is measurable for each $t\in \R$.  Since $\tL^{(N)}_t$ is a continuous transformation of $L^{(N)}_t$, it is also measurable. Since $0\le e^{-\rate s}\le 1$, $0\le \tL\le 1$, and $\tL$ is nondecreasing,  $\Prot_N$ and $\Prem_N$
both take values in $[0,1]$.  Moreover, the measurability of $\tL$ implies that $\Prot_N$ and $\Prem_N$ are measurable.  Thus both $\BE[\Prot_N]$ and $\BE[\Prem_N]$ are well-defined, finite, and nonnegative.
Our goal is to evaluate $S_N$ when $N$ is large.  This will be accomplished in \eqref{E:Sas}.
\begin{figure}
\includegraphics[scale=0.7]{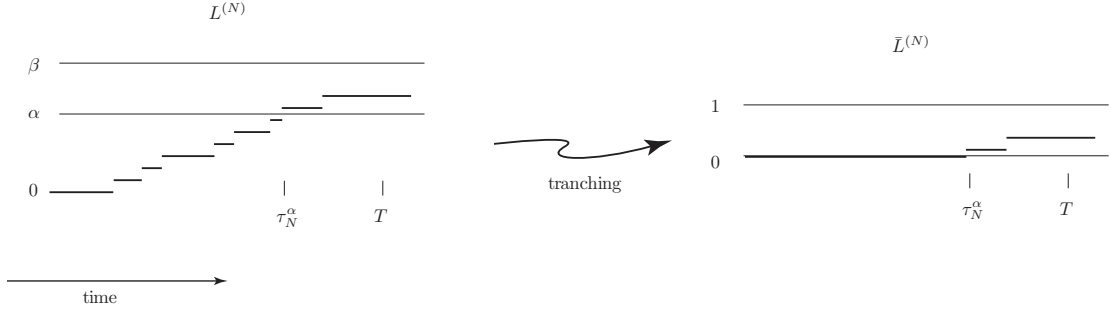}
\caption{Loss processes $L^{(N)}$ and $\tL^{(N)}$}
\label{fig:typ}
\end{figure}

\section{The Model}\label{S:SM}

Let's now think about the sources of randomness in the names.  Each name is affected by its own \emph{idiosyncratic} randomness and by \emph{systemic} randomness (which affects all of the names).  Assumedly, the systemic randomness,
which corresponds to macroeconomic factors, is \emph{low-dimensional}
compared to the number of names.  For example, there may be only a handful of macroeconomic factors which affect a pool of many thousands of names.  We can capture this functionality as
\begin{equation}\label{E:structural} \chi_{\{\tau_n<T\}} = \chi_A(\xi^\idio_n,\xi^\system) \end{equation}
where the $\{\xi^\idio_n\}_{n\in \N}$ and $\xi^\system$ are
all independent random variables, and $A$ is some appropriate set in the product space of the sets where the $\xi^\idio_n$'s and $\xi^\system$ take values.
Since we want the defaults to be identically distributed, we may furthermore assume that the $\xi^\idio_n$'s are identically distributed.

Our interest is to understand the implications of the structural model
\eqref{E:structural}.  We are not so much concerned with specific models for the $\xi^\idio_n$'s, the $\xi^\system$, or the set $A$ but rather the structure of the rare losses in the investment-grade tranches.
We would also like to avoid, as much as possible, a detailed analysis
of the parts of \eqref{E:structural} since in practice what we have available
to carry out pricing calculations is the price of credit default swaps for
the individual names; i.e. (after a transformation), $\BP\{\tau_N<T\}$.
Thus we can't with certainty get our hands on the details of \eqref{E:structural}.  There may in fact be several models of the type \eqref{E:structural}
which lead to the same ``price'' for the rare events involved in an
investment-grade tranche.  If we can understand more about the structure
of rare events in these tranches, we can understand which aspects of
\eqref{E:structural} are important (and then try to calibrate specific models
using that insight).

Regardless of the details of \eqref{E:structural}, we can make some headway.
The notional loss at time $T-$ will be given by
\begin{equation*} L^{(N)}_{T-} = \frac{1}{N}\sum_{n=1}^N \chi_A(\xi^\idio_n,\xi^\system). \end{equation*}
The definition of an investment-grade tranche is that $\BP\lb L^{(N)}_{T-}>\alpha\rb$ is small.  Guided by Chebychev's inequality, lets' define
\begin{equation*} \mu^{(N)} \Def \frac{1}{N}\sum_{n=1}^N \BE\left[\chi_A(\xi^\idio_n,\xi^\system)\right] \qquad \text{and}\qquad \sigma^{(N)} \Def \sqrt{\BE\left[\left(L^{(N)}_{T-}-\mu^{(N)}\right)^2\right]}. \end{equation*}
If $\alpha>\mu^{(N)}$, Chebychev's inequality gives us that
\begin{equation*} \BP\lb L^{(N)}_{T-}>\alpha\rb \le \frac{\left(\sigma^{(N)}\right)^2}{\left(\alpha-\mu^{(N)}\right)^2}. \end{equation*}
In order for this to be small, we would like that $\sigma^{(N)}$ be small;
this is the point of pooling.  For any fixed value of $x$, the conditional
law of $L^{(N)}_{T-}$ given that $\xi^\system=x$ is the variance of 
$\tfrac{1}{N}\sum_{n=1}^N\chi_A(\xi^\idio_n,x)$;
thus the conditional variance of $L^{(N)}_{T-}$ given that $\xi^\system=x$
is at most of order $\tfrac{1}{4N}$.  Hopefully, when we reinsert
the systemic randomness, the variance of $L^{(N)}$ will still be small,
and we will indeed have an investment-grade tranche.

In fact, we can do better than Chebychev's inequality.
By again conditioning on $\xi^\system$, we can write that
\begin{equation*} \BP\lb L^{(N)}_{T-}>\alpha\rb = \BE\left[\BP\lb L^{(N) }_{T-}>\alpha\big|\xi^\system\rb \right] \end{equation*}
Thus the tranche will be investment-grade if $\BP\lb L^{(N) }_{T-}>\alpha\big|\xi^\system=x\rb$ is small for ``most'' values of $x$ (see Remark \ref{R:systemic}).  As mentioned above,
however, we know the law of $L^{(N)}_{T-}$ conditioned on $\xi^\system$.
Namely, 
\begin{equation*} \BP\lb L^{(N) }_{T-}>\alpha\big|\xi^\system=x\rb = 
\BP\lb \frac{1}{N}\sum_{n=1}^N\chi_A(\xi^\idio_n,x)>\alpha\rb. \end{equation*}
This then clearly motivates a natural two-step approach.
Our first step is to condition on the value of the systemic randomness 
(which we may think of as fixing a ``state of the world'' or a ``regime'') and concentrate
on how rare events occur due to idiosyncratic randomness (i.e., to effectively \emph{suppress} the systemic randomness).  It will turn out that
this is in itself a fairly involved calculation.  Nevertheless, it is
connected with a classic problem in large deviations theory---\emph{Sanov's theorem}.  With this in hand, we should
then be able to return to the original problem and average over the systemic randomness (in Subsection \ref{S:Correlated}).  Some of the
finer details of these effects of correlation will appear in sequels
to this paper.  Here we will restrict our interest in the effects of correlation
to a very simple model (which is hopefully nevertheless illustrative).  

Define $I\Def [0,\infty]$ and endow $I$ with its usual topology under which it is Polish and its usual ordering
\footnote{We endow $I$ with the usual topology and ordering.
$I$ is the collection of nonnegative real numbers and a non-real ``point'',
which we label as $\infty$.  Define $\wp:[0,\pi/2]\to I$ as $\wp(t) \Def \tan(t)$ for $t\in [0,\pi/2)$, and define $\wp(\pi/2)\Def \infty$.  Then $\wp$
is a bijection.  The topology
and ordering of $I$ is that given by pushing the topology and ordering
of $[0,\pi/2]$ forward through $\wp$.  Thus $I$
is Polish and in fact compact.}; each of the default times is an $I$-valued random variable.  Since we want to consider a countable
collection of default times, we will take
our event space to be $\Omega\Def I^\N$ and\footnote{As usual, for any topological space $\Xsp$, $\Borel(\Xsp)$ is the Borel sigma-algebra of subsets of $\Xsp$, and $\Pspace(\Xsp)$ is the
collection of probability measures on $(\Xsp,\Borel(\Xsp))$.} we will take $\filt\Def \Borel(I^\N)$.
Fix next $\mu\in \Pspace(I)$; we will want all of the names to be identically distributed with common law $\mu$.
To reflect our initial working assumption
that the names are independent, we now let the risk neutral probability $\BP\in \Pspace(I^\N)$ be defined
by requiring that
\begin{equation*} \BP\left(\bigcap_{n=1}^N\{\tau_n\in A_n\}\right) = \prod_{n=1}^N \mu(A_n). \end{equation*}
for all $N\in \N$ and all $\{A_n\}_{n\in \N}\subset \Borel(I)$.
We also define, in the usual way,
\begin{equation*} F(t) \Def \mu[0,t]. \qquad t\in I \end{equation*}
In principle, one can recover $F$ from prices of credit default swaps.

\begin{example} Our setup includes both the Merton model and the reduced form model.  For the reduced form model, let $\lambda:(0,\infty)$ be the hazard rate and set
\begin{equation*} f(t) = \lambda(t)\exp\left[-\int_{s=0}^t \lambda(s)ds\right] \qquad t\in (0,\infty)\end{equation*}
and let $F$ have density $f$.
On the other hand, for the Merton model with stock volatility $\sigma$, risk-neutral drift $\theta$, initial valuation $1$, and bankruptcy barrier $K\in (0,1)$, we would have
\begin{equation*} f(t) = \frac{\ln (1/K)}{\sqrt{2\pi \sigma^2 t^3}}\exp\left[-\frac{1}{2\sigma^2 t}\left(\left(\theta-\frac{\sigma^2}{2}\right)t+\ln \frac{1}{K}\right)^2\right]. \qquad t\in (0,\infty)\end{equation*}
Again define $F$ by integrating $f$.
\end{example}

We can then rewrite the notional loss process as 
\begin{equation*} L^{(N)}_t = \frac{1}{N}\sum_{n=1}^N\chi_{[0,t]}(\tau_n) = \nu^{(N)}[0,t] \end{equation*}
where $\nu^{(N)}$ is empirical distribution of the $\tau_n$'s; i.e.,
\begin{equation}\label{E:nudef} \nu^{(N)} = \frac{1}{N}\sum_{n=1}^N \delta_{\tau_n}. \end{equation}
We point out that $\nu^{(N)}$ is a random element of $\Pspace(I)$ (i.e., a random measure\footnote{Since the map $x\mapsto \delta_x$ is a measurable map from $I$ to $\Pspace(I)$, each map $\omega\mapsto \delta_{\tau_n(\omega)}$ is a measurable map from $\Omega$ to $\Pspace(I)$.  Thus for each $N$, the map $\omega\mapsto (\delta_{\tau_1(\omega)},\delta_{\tau_2(\omega)}\dots \delta_{\tau_N(\omega)})$ is a measurable map from $\Omega$ to $(\Pspace(I))^N$.  Recalling the definition of the weak
topology as integration against continuous bounded functions, we then see that
the map $(\mu_1,\mu_2\dots \mu_N)\mapsto \frac{1}{N}\sum_{n=1}^N \mu_n$ is continuous and thus measurable as a map from $(\Pspace(I))^N$ to $\Pspace(I)$.  Hence $\nu^{(N)}$ is indeed a $\Pspace(I)$-valued random variable.}).  This formulation
is the starting point for our analysis and will lead to several insights.  In particular, the (weak) law of large numbers implies that for each $t>0$, 
\begin{equation} \label{E:convprob}\lim_{N\to \infty}L^{(N)}_t=F(t). \qquad \text{(in probability)}\end{equation}
More generally, $\nu^{(N)}$ tends to $\mu$ (in the Prohorov topology on $\Pspace(I)$); for every $\eps>0$,
\begin{equation*} \lim_{N\nearrow \infty}\BP\left\{d_{\Pspace(I)}(\nu^{(N)},\mu)\ge \eps\right\}=0. \end{equation*}
where $d_{\Pspace(\R_+)}$ is the Prohorov metric \cite{MR88a:60130}.

Consider now an investment-grade tranche; i.e., a senior or super-senior tranche.  The attachment point for such a tranche should be set so that it is unlikely to suffer any defaults; i.e., it is unlikely that 
$\Prot_N$ is nonzero.   Clearly 
\begin{equation}\label{E:support} \left\{\Prot_N\not = 0\right\} = \{L^{(N)}_{T-}>\alpha\} = \{\nu^{(N)}[0,T)>\alpha\}, \end{equation}
and comparing this with \eqref{E:convprob}, we see that a tranche will be investment-grade if and only an obvious requirement holds:
\begin{assumption}[Investment-grade]\label{A:IG} We assume that
\begin{equation*} \alpha>F(T-).\end{equation*}
\end{assumption}
\noindent In this case, the valuation of such a tranche should depend in large part on how ``rare'' it is that 
$L^{(N)}_{T-}>\alpha$.  As $N$ becomes large, \eqref{E:convprob} means that in fact it becomes less and less likely that $L^{(N)}_{T-}>\alpha$.  Note also that since $\alpha<1$, this assumption implies that $F(T-)<1$.  This is natural; if $F(T-)=1$,
then all defaults must have occurred before $T$, essentially precluding the possibility of constructing an investment-grade tranche.

Combining our comments after \eqref{E:SN} about the structure of $\Prot_N$ and
\eqref{E:support}, we have that
\begin{equation}\label{E:bounded} 0\le \Prot_N\le \chi_{\{\nu^{(N)}[0,T)>\alpha\}}. \end{equation}
Hence for an investment-grade tranche, $\BE[\Prot_N]$ is small if it is unlikely that $\nu^{(N)}[0,T)>\alpha$ (in other words, we don't have any competition between ``big'' values of $\Prot_N$ and ``small'' sets).
Note also that \eqref{E:convprob} implies that $\lim_{N\to \infty}\tL^{(N)}_{T-}=0$ (in probability) so that in fact
\begin{equation}\label{E:Premas}\lim_{N\to \infty} \BE[\Prem_N] = \sum_{t\in \PTimes} e^{-\rate t}. \end{equation}
In other words, if losses are unlikely, all of the premiums will most likely be paid.
Thus the nontrivial part of $S_N$ comes from the protection leg, whose value is small.

Let's now step into the world of large deviations, which tells us how to 
study rare events.  The asymptotics of $\nu^{(N)}$ is exactly the subject of 
\emph{Sanov's} theorem \cite{MR1619036}, which states that $\nu^{(N)}$ has
a large deviations principle with rate function given by relative entropy with respect to $\mu$; i.e., with rate function
\begin{equation*} H(\mu'|\mu) = \begin{cases} \int_{t\in I}\ln \frac{d\mu'}{d\mu}(t)\mu'(dt) &\text{if $\mu'\ll \mu$} \\
\infty &\text{else.}\end{cases}\end{equation*}
Informally, for any $A\in \Borel(\Pspace(I))$,
\begin{equation}\label{E:san} \BP\left\{\nu^{(N)}\in A\right\} \overset{N\nearrow \infty}{\asymp} \exp\left[-N\inf_{\mu'\in A}H(\mu'|\mu)\right]. \end{equation}
Since large deviations is not in the mainstream of financial mathematics (see, however, \cite{MR1644496})
we have summarized some of its foundations in Subsection \ref{S:LD}.  Combining \eqref{E:bounded} with Sanov's theorem, we conjecture that for large $N$
\begin{equation*} \BE\left[\Prot_N\right] \le \BP\left\{\nu^{(N)}[0,T)>\alpha\right\} \overset{N\nearrow \infty}{\asymp} \exp\left[-N \fI(\alpha)\right] \end{equation*}
where
\begin{equation*} \fI(\alpha) \Def \inf\left\{H(\mu'|\mu): \mu'[0,T)\ge \alpha\right\}. \end{equation*}
Although this looks intimidating (it is an infinite-dimensional minimization problem), in fact it has an easy solution and an explicit minimizer.
For $\alpha_1$ and $\alpha_1$ in $[0,1]$, define
\begin{equation*} \hbar(\alpha_1,\alpha_2) \Def \begin{cases} \alpha_1\ln \frac{\alpha_1}{\alpha_2} + (1-\alpha_1)\ln \frac{1-\alpha_1}{1-\alpha_2} &\text{for $\alpha_1$ and $\alpha_2$ in $(0,1)$} \\
\ln \frac{1}{\alpha_2} &\text{for $\alpha_1=1$, $\alpha_2\in (0,1)$} \\
\ln \frac{1}{1-\alpha_2} &\text{for $\alpha_1=0$, $\alpha_2\in [0,1)$} \\
\infty &\text{else.}\end{cases}\end{equation*}
\begin{proposition}\label{P:explicit} We have that
\begin{equation*} \fI(\alpha) = \hbar(\alpha,F(T-)) = H(\tilde \mu^*_\alpha|\mu), \end{equation*}
where
\begin{equation}\label{E:dacc} \tilde \mu^*_\alpha(A) = \mu(A\cap [0,T))\frac{\alpha}{F(T-)} + \mu(A\cap [T,\infty])\frac{1-\alpha}{1-F(T-)} \end{equation}
for all $A\in \Borel(I)$.\end{proposition}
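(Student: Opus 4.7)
The strategy is to reduce the infinite-dimensional minimization over $\{\mu' : \mu'[0,T)\ge \alpha\}$ to a one-dimensional minimization over the mass $\beta = \mu'[0,T)$, by showing that for each fixed $\beta$ the optimal $\mu'$ replicates the conditional law of $\mu$ on $[0,T)$ and on $[T,\infty]$. The key tool is the log-sum (Jensen) inequality applied separately on the two regions.

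First I would observe that if $\mu'$ is not absolutely continuous with respect to $\mu$, then $H(\mu'|\mu)=\infty$ and the constraint is not binding, so we may restrict attention to $\mu'\ll \mu$ and write $g\Def d\mu'/d\mu$. Partition $I=[0,T)\cup [T,\infty]$ and split the entropy accordingly:
\begin{equation*}
H(\mu'|\mu) = \int_{[0,T)} g \ln g \, d\mu + \int_{[T,\infty]} g \ln g \, d\mu.
\end{equation*}
Writing $\beta \Def \mu'[0,T)=\int_{[0,T)} g\, d\mu$ and $1-\beta=\mu'[T,\infty]$, I apply Jensen's inequality to the convex function $x\mapsto x\ln x$ on each piece (normalizing by the $\mu$-mass of the piece), which yields the log-sum inequality
\begin{equation*}
\int_A g\ln g\, d\mu \;\ge\; \mu'(A)\ln \frac{\mu'(A)}{\mu(A)}
\end{equation*}
on any Borel set $A$ with $\mu(A)>0$. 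Summing the two contributions produces
\begin{equation*}
H(\mu'|\mu) \;\ge\; \beta \ln \frac{\beta}{F(T-)} + (1-\beta)\ln \frac{1-\beta}{1-F(T-)} = \hbar(\beta, F(T-)),
\end{equation*}
with the edge cases $\beta\in\{0,1\}$ handled by inspection and matched to the piecewise definition of $\hbar$. Equality holds precisely when $g$ is constant on each piece, i.e., $g = \beta/F(T-)$ on $[0,T)$ and $g=(1-\beta)/(1-F(T-))$ on $[T,\infty]$.

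Next I minimize $\beta\mapsto \hbar(\beta,F(T-))$ over the constraint set $\beta\in[\alpha,1]$. A quick computation shows this function is strictly convex on $[0,1]$ with unique minimum zero at $\beta=F(T-)$. Assumption \ref{A:IG} gives $\alpha>F(T-)$, so the restriction to $[\alpha,1]$ is monotone increasing and the minimum is attained at $\beta=\alpha$, with value $\hbar(\alpha,F(T-))$. The degenerate case $F(T-)=0$ is consistent: the lower bound becomes $+\infty$, and indeed no $\mu'\ll\mu$ can assign positive mass to $[0,T)$, so $\fI(\alpha)=\infty=\hbar(\alpha,0)$.

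Finally, to see the infimum is attained and to identify the minimizer, take $\beta=\alpha$ in the equality condition above: this is exactly the measure $\tilde\mu^*_\alpha$ defined in \eqref{E:dacc}. A direct evaluation of $H(\tilde\mu^*_\alpha|\mu)$ (plugging the piecewise-constant density into the definition of relative entropy) yields $\hbar(\alpha,F(T-))$, closing the chain of equalities. The main technical point to be careful about is the correct handling of the boundary values in the definition of $\hbar$ (i.e., $\alpha_1\in\{0,1\}$ or $\alpha_2\in\{0,1\}$); everything else is a routine application of Jensen's inequality and the convexity of the Bernoulli relative entropy.
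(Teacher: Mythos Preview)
Your proposal is correct and follows essentially the same route as the paper: the paper also splits the entropy over $[0,T)$ and $[T,\infty]$, applies Jensen's inequality to $x\mapsto x\ln x$ on each piece to obtain the lower bound $\hbar(\mu'[0,T),F(T-))$ with equality exactly when the density is piecewise constant (this is packaged as Lemma~\ref{L:hear}), and then minimizes over $\alpha'\ge\alpha$ by computing $\fI_\circ'(\alpha')>0$ for $\alpha'>F(T-)$. Your use of convexity of $\hbar(\cdot,F(T-))$ together with the location of its unconstrained minimizer is equivalent to the paper's derivative computation.
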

\noindent The proof of this is given Section \ref{S:Proofs}.
In fact, the formula for $\fI$ is what we would expect from considering only $L^{(N)}_{T-}$.  We can think of $L^{(N)}_{T-}$ as counting the normalized number of
heads in a collection of i.i.d. coin flips, where the probability of heads
(i.e., defaults before time $T$)
for each coin is $F(T-)$.  The likelihood that the normalized number of heads
is approximately $\alpha$ is given, via Sanov's theorem, by relative entropy
of a coin flip with bias $\alpha$ with respect to a coin with bias $F(T-)$
(see the comments after Theorem \ref{T:measurechange}).

We are almost ready to state our main theorem.
We need one last assumption.
\begin{assumption}\label{A:density} We assume that $F(T')<F(T)$ for all $T'\in [0,T)$.
\end{assumption}
\noindent In other words, $F$ cannot be flat to the left of $T$.  Thus $F(T)$ is positive (viz., for $T'\in [0,T)$, $F(T)>F(T')\ge 0$); this is natural, since if $F(T)=0$, then there is no possibility of any defaults by time $T$.
Secondly, if $F$ is flat right before $T$, then any defaults by time $T$ must in fact have occurred earlier, so we can effectively reduce the time interval of interest to a smaller one.
By disallowing such a flat, we ensure that there is some likelihood of
defaults right before $T$, allowing us to carry out a quantitative analysis
of $L^{(N)}$ right before time $T$ (see the proof of Lemma \ref{L:TTimes}).

The goal of this paper is to formalize the asymptotics conjectured above.
Set
\begin{equation}\label{E:vkapdef} \vkap\Def \ln \left(\frac{\alpha}{1-\alpha}\frac{1-F(T-)}{F(T-)}\right) = \ln \left(\frac{\frac{1}{F(T-)}-1}{\frac{1}{\alpha}-1}\right). \end{equation}
In light of Assumption \ref{A:IG}, the second formula ensures that $\vkap>0$.
\begin{theorem}[Main]\label{T:Main} We have that
\begin{multline*} \BE\left[\Prot_N\right] = \frac{e^{-\rate T}\exp\left[-\vkap\left(\granup -N\alpha\right)\right]}{N^{3/2}(\beta-\alpha)\sqrt{2\pi\alpha(1-\alpha)}}\lb \frac{\alpha(1-\alpha)F(T-)(1-F(T-))}{(\alpha-F(T-))^2} \right.\\
\left. +\left(\granup-N\alpha\right)\frac{\alpha(1-F(T-))}{\alpha-F(T-)} + \Err(N)\rb \exp\left[-N \fI(\alpha)\right] \end{multline*}
where $\lim_{N\to \infty}\Err(N)=0$.\end{theorem}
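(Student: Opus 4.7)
The strategy is to condition on $K^{(N)}_{T-}\Def N L^{(N)}_{T-}$, the number of defaults strictly before $T$, which is Binomial$(N,F(T-))$. By Assumption \ref{A:IG} the event $\{K^{(N)}_{T-}\ge \granup\}$ is a large-deviation event and is the only way $\Prot_N$ is nonzero. On this event with $K^{(N)}_{T-}=k\in[\granup,\lfloor N\beta\rfloor]$ (so the tranche has not saturated), a direct accounting of the jumps of $\tL^{(N)}$ yields
\begin{equation*}
\Prot_N = \frac{\granup-N\alpha}{N(\beta-\alpha)}\,e^{-\rate \tau_{(\granup)}} + \frac{1}{N(\beta-\alpha)}\sum_{i=1}^{k-\granup} e^{-\rate \tau_{(\granup+i)}},
\end{equation*}
where $\tau_{(j)}$ denotes the $j$-th smallest default time. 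I would then decompose $\BE[\Prot_N]=\sum_{j\ge 0}\BE[\Prot_N\,\chi_{\{K^{(N)}_{T-}=\granup+j\}}]$, the saturation contribution from $K^{(N)}_{T-}>\lfloor N\beta\rfloor$ being bounded by $\BP(K^{(N)}_{T-}>\lfloor N\beta\rfloor)=O(e^{-N\hbar(\beta,F(T-))})$, which is super-exponentially smaller than $e^{-N\fI(\alpha)}$ and hence absorbed into $\Err(N)$.

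The two key ingredients are exact binomial asymptotics and concentration of the top order statistics at $T$. Stirling's formula gives, uniformly for $k=\granup+j$ with $j=o(\sqrt{N})$,
\begin{equation*}
\BP(K^{(N)}_{T-}=\granup+j) = \frac{1+o(1)}{\sqrt{2\pi N\alpha(1-\alpha)}}\,\exp[-N\hbar(k/N,F(T-))];
\end{equation*}
Taylor-expanding $\hbar(\cdot,F(T-))$ about $\alpha$---using $\fI(\alpha)=\hbar(\alpha,F(T-))$ from Proposition \ref{P:explicit}, $\partial_1\hbar(\alpha,F(T-))=\vkap$ from \eqref{E:vkapdef}, and the fact that the quadratic term is $O(j^2/N)$ uniformly in the head range---reduces the exponent to $N\fI(\alpha)+\vkap(\granup-N\alpha+j)+o(1)$. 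For the order statistics: conditional on $K^{(N)}_{T-}=k$, the defaults in $[0,T)$ are $k$ i.i.d.\ samples from $\mu(\,\cdot\,\cap[0,T))/F(T-)$, a distribution which by Assumption \ref{A:density} has positive mass in every left-neighborhood of $T$; Lemma \ref{L:TTimes} then furnishes $\BE[e^{-\rate \tau_{(\granup+i)}}\mid K^{(N)}_{T-}=k]=e^{-\rate T}(1+o(1))$ uniformly for $0\le i\le j$ in the head range.

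Combining these ingredients, the term-by-term asymptotic is
\begin{equation*}
\BE[\Prot_N\,\chi_{\{K^{(N)}_{T-}=\granup+j\}}] = \frac{e^{-\rate T}(\granup-N\alpha+j)e^{-\vkap j}}{N^{3/2}(\beta-\alpha)\sqrt{2\pi\alpha(1-\alpha)}}\,e^{-N\fI(\alpha)-\vkap(\granup-N\alpha)}(1+o(1)),
\end{equation*}
and summing the geometric series
\begin{equation*}
\sum_{j\ge 0}(\granup-N\alpha+j)e^{-\vkap j} = \frac{\granup-N\alpha}{1-e^{-\vkap}} + \frac{e^{-\vkap}}{(1-e^{-\vkap})^2},
\end{equation*}
followed by the substitution $1-e^{-\vkap}=(\alpha-F(T-))/[\alpha(1-F(T-))]$ from \eqref{E:vkapdef}, produces precisely the bracketed expression in the theorem.

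The main obstacle is the uniform bookkeeping that justifies replacing the truncated head sum (to which the term-by-term analysis applies) with the full geometric sum. The head $j\le J_N$ with $J_N\to\infty$ slowly (such as $J_N=C\log N$) requires a uniform $(1+o(1))$ in both the Stirling step and the order-statistic step, the latter being where Assumption \ref{A:density} enters crucially by forcing a quantitative lower bound on the mass of $\mu$ just below $T$. The tail $j>J_N$ must be shown to contribute $o\bigl(N^{-3/2}e^{-N\fI(\alpha)-\vkap(\granup-N\alpha)}\bigr)$, which follows from the trivial bound $\Prot_N\le 1$ together with an exponential Chernoff upper bound on $\BP(K^{(N)}_{T-}\ge \granup+j)$, provided $C$ is chosen large enough.
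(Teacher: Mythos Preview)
Your proposal is correct and follows the paper's route in all essentials: both condition on $K^{(N)}_{T-}$ (equivalently $\gamma_N$), obtain the local binomial asymptotics (the paper via the exponential tilt of Theorem~\ref{T:measurechange} together with Lemma~\ref{L:probasymp}, you via Stirling plus a Taylor expansion of $\hbar$ about $\alpha$---these are algebraically the same computation since the Radon--Nikodym derivative is $\sigma(\gamma_N)$-measurable), invoke Lemma~\ref{L:TTimes} to collapse the relevant order statistics to $e^{-\rate T}$, and then sum the identical geometric series in $j$. The only difference is organizational: the paper extracts the factor $e^{-N\fI(\alpha)}e^{-\vkap\gamma_N}$ once via the measure change and then works under the centered law $\tilde\BP_N$ with a head cutoff at $N^{1/4}$, whereas you recover that factor term-by-term from the Taylor expansion and use a logarithmic cutoff with a Chernoff tail bound.
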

\noindent We can recognize a number of effects here.  Firstly, the $e^{-\rate T}$
term reflects the fact that while by assumption losses in the CDO are unlikely, the least unlikely way for them to occur is right before expiry.  The term $\beta-\alpha$ in the denominator reflects the tranche width; note that we are looking at large $N$-approximations here; if we were to first take asymptotics as the tranche
width tends to zero, we would probably capture some different effects
(but we expect that the exponentially small entropy term would still appear).
The $\sqrt{2\pi\alpha(1-\alpha)}$ reflects something like a Gaussian correction
term (it directly comes from the calculations of Section \ref{S:Proofs}).  The $N^{3/2}$
is a combination of two things.  Part of it ($N^{1/2}$) also comes from the Gaussian correction.  The rest ($N$) comes from the actual size of the protection
leg payments $\Prot_N$ once the attachment point has been reached.
The unsightly term $\granup - N\alpha$ comes from an unavoidable granularity in our problem; the loss process can only take on values in $\Z/N$.  We expect this granularity to disappear if the notional loss takes on a continuum of values.  This would be the case, for example, with random recoveries (cf. \cite{AndersonSidenius}).  Of course, by taking $\alpha$ to be a multiple of $1/N$, we can make this granularity disappear---at the cost of making our calculations
look more restrictive than they actually are.

Finally, we explicitly point out that our analysis is \emph{asymptotic} as
the number $N$ of names becomes large.  We cannot say anything specific about
any finite $N$.  This is analogous to the law of large numbers; the law of
large numbers cannot, for example, give information about any finite number
of coin flips, but rather is useful in framing one's thoughts when one
has ``many'' coin flips.

Combining \eqref{E:Premas} and Theorem \ref{T:Main}, we see that the asymptotic behavior of the premium $S_N$ is given by
\begin{equation}\label{E:Sas} \begin{aligned} S_N &= \frac{1}{N^{3/2}}\frac{e^{-\rate T}\exp\left[-\vkap\left(\granup -N\alpha\right)\right]}{\lb \sum_{t\in \PTimes} e^{-\rate t}\rb (\beta-\alpha)\sqrt{2\pi\alpha(1-\alpha)}}\lb \frac{\alpha(1-\alpha)F(T-)(1-F(T-))}{(\alpha-F(T-))^2}\right.\\
&\qquad \left. +\left(\granup-N\alpha\right)\frac{\alpha(1-F(T-))}{\alpha-F(T-)}  + \Err'(N)\rb \exp\left[-N \fI(\alpha)\right] \end{aligned}\end{equation}
where $\lim_{N\to \infty}\Err'(N)=0$.  

To close this section, we plot some ``theoretical'' prices as a function of the number $N$. By ``theoretical'', we mean the
quantity 
\begin{multline*} S^*_N \Def \frac{\exp\left[-\vkap\left(\granup -N\alpha\right)\right]}{N^{3/2}\sqrt{\alpha(1-\alpha)}}\lb \frac{\alpha(1-\alpha)F(T-)(1-F(T-))}{(\alpha-F(T-))^2} \right.\\
\left.+\left(\granup-N\alpha\right)\frac{\alpha(1-F(T-))}{\alpha-F(T-)}\rb \exp\left[-N \fI(\alpha)\right] \end{multline*}
We have here set $\Err'\equiv 0$ in \eqref{E:Sas} and have removed the prefactor
\begin{equation*} \frac{e^{-\rate T}}{\lb \sum_{t\in \PTimes} e^{-\rate t}\rb (\beta-\alpha)\sqrt{2\pi}}. \end{equation*}
\begin{figure}[b]
\includegraphics[width=5in, height=3in]{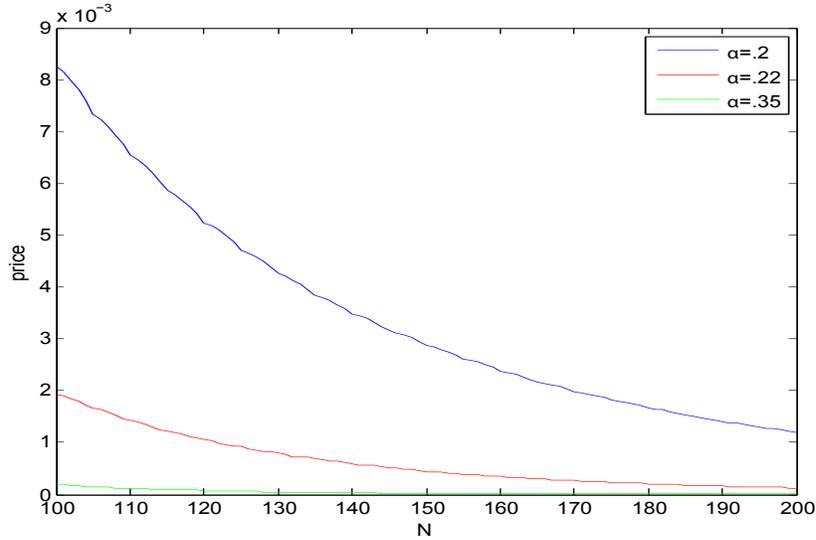}
\caption{$S^*_N$ for several values of $\alpha$}
\label{fig:prices}
\end{figure}

\subsection{Correlation}\label{S:Correlated}
We can now introduce a simple model of correlation without too much trouble.  Assume that $\xi^\system$ takes values in a finite set $\Xsp$.  Fix $\{p(x);\, x\in \Xsp\}$ such that $\sum_{x\in \Xsp}p(x)=1$ and $p(x)>0$ for all $x\in \Xsp$;
we will assume that $\xi^\system$ takes on the value $x$ with probability $p(x)$.  We can think of the set $\Xsp$ as the collection of possible states of the world.  If we believe in \eqref{E:structural}, we should then be in the previous
case if we condition on the various values of $\xi^\system$.  To formalize this,
fix a $\{\mu(\cdot,x)\}_{x\in \Xsp}\subset \Pspace(I)$.  Fix a probability measure $\BP$ such that
\begin{equation}\label{E:corrprob} \BP\left(\bigcap_{n=1}^N\{\tau_n\in A_n\}\right) = \sum_{x\in\Xsp}\lb \prod_{n=1}^N \mu(A_n,x)\rb p(x). \end{equation}
for all $\{A_n\}_{n\in \N}\subset \Borel(I)$.

To adapt the previous calculations to this case, we need the analogue of Assumptions \ref{A:IG} and \ref{A:density}.
Namely, we need that $\max_{x\in \Xsp}\mu([0,T),x)<\alpha$ and also that $\mu([0,T'],x)<\mu([0,T],x)$ for all $T'\in [0,T)$ and all $x\in \Xsp$.

\begin{remark}\label{R:systemic} The requirement that $\max_{x\in \Xsp}\mu([0,T),x)<\alpha$ is a particularly unrealistic one.  It means that the tranche losses will be rare
for \emph{all} values of the systemic parameter.  In any truly applicable
model, the losses will come from a combination of bad values of the
systemic parameter and from tail events in the pool of idiosyncratic randomness
(i.e., we need to balance the size of $\BP\lb L^{(N) }_{T-}>\alpha\big|\xi^\system=x\rb$ against the distribution of $\xi^\system$).
One can view our effort here as study which focusses primarily on tail events in the pool of idiosyncratic randomness.
Any structural model which attempts to study losses due to both idiosyncratic and systemic randomness will most likely involve calculations which are similar
in a number of ways to ours here.  We will explore this issue elsewhere.
\end{remark}

For each $x\in \Xsp$, define
\begin{equation*} \vkap_x\Def \ln \left(\frac{\alpha}{1-\alpha}\frac{1-\mu([0,T),x)}{\mu([0,T),x)}\right) = \ln \left(\frac{\frac{1}{\mu([0,T),x)}-1}{\frac{1}{\alpha}-1}\right). \end{equation*}
Then
\begin{multline*} \BE\left[\Prot_N\right] = \frac{e^{-\rate T}}{N^{3/2}(\beta-\alpha)\sqrt{2\pi\alpha(1-\alpha)}}\sum_{x\in \Xsp}\left(\exp\left[-\vkap_x\left(\granup -N\alpha\right)\right]\right.\\
\times \left.\lb \frac{\alpha(1-\alpha)\mu([0,T),x)\left(1-\mu([0,T),x)\right)}{\left(\alpha-\mu([0,T),x)\right)^2} +\left(\granup-N\alpha\right)\frac{\alpha\left(1-\mu([0,T),x)\right)}{\alpha-\mu([0,T),x)} + \Err(N)\rb \right.\\
\left.\times \exp\left[-N \hbar(\alpha,\mu([0,T),x))\right]p(x)\right) \end{multline*}
where $\lim_{N\to \infty}\Err_x(N)=0$ for all $x\in \Xsp$.  Similarly we have that
\begin{multline*} S_N = \frac{1}{N^{3/2}}\frac{e^{-\rate T}}{\lb \sum_{t\in \PTimes} e^{-\rate t}\rb (\beta-\alpha)\sqrt{2\pi\alpha(1-\alpha)}}\sum_{x\in \Xsp}\left(\exp\left[-\vkap_x\left(\granup -N\alpha\right)\right]\right.\\
\left. \times \lb \frac{\alpha(1-\alpha)\mu([0,T),x)\left(1-\mu([0,T),x)\right)}{\left(\alpha-\mu([0,T),x)\right)^2}+\left(\granup-N\alpha\right)\frac{\alpha\left(1-\mu([0,T),x)\right)}{\alpha-\mu([0,T),x)}  + \Err'(N)\rb\right.\\
\left.\times \exp\left[-N \hbar(\alpha,\mu([0,T),x))\right]p(x)\right) \end{multline*}
where $\lim_{N\to \infty}\Err'_x(N)=0$ for all $x\in \Xsp$.
If we further assume that there is a unique $x^*\in \Xsp$ such that $\min_{x\in \Xsp}  \fI(\alpha,\mu([0,T),x))=\fI(\alpha,\mu([0,T),x^*))$, we furthermore
have that
\begin{align*} \BE\left[\Prot_N\right] &= \frac{e^{-\rate T}}{N^{3/2}(\beta-\alpha)\sqrt{2\pi\alpha(1-\alpha)}}\exp\left[-\vkap_{x^*}\left(\granup -N\alpha\right)\right]\\
&\qquad \times \lb \frac{\alpha(1-\alpha)\mu([0,T),x^*)\left(1-\mu([0,T),x^*)\right)}{\left(\alpha-\mu([0,T),x^*)\right)^2} 
+\left(\granup-N\alpha\right)\frac{\alpha\left(1-\mu([0,T),x^*)\right)}{\alpha-\mu([0,T),x^*)} + \Err(N)\rb \\
&\qquad \times \exp\left[-N \hbar(\alpha,\mu([0,T),x^*))\right]p(x^*)\\
S_N &= \frac{1}{N^{3/2}}\frac{e^{-\rate T}}{\lb \sum_{t\in \PTimes} e^{-\rate t}\rb (\beta-\alpha)\sqrt{2\pi\alpha(1-\alpha)}}\exp\left[-\vkap_{x^*}\left(\granup -N\alpha\right)\right]\\
&\qquad \times \lb \frac{\alpha(1-\alpha)\mu([0,T),x^*)\left(1-\mu([0,T),x^*)\right)}{\left(\alpha-\mu([0,T),x^*)\right)^2}+\left(\granup-N\alpha\right)\frac{\alpha\left(1-\mu([0,T),x^*)\right)}{\alpha-\mu([0,T),x^*)}  + \Err'(N)\rb\\
&\qquad \times \exp\left[-N \hbar(\alpha,\mu([0,T),x^*))\right]p(x^*) \end{align*}
where $\lim_{N\to \infty}\Err(N)=0$ and $\lim_{N\to \infty}\Err'(N)=0$.

Note that we can use this methodology to approximately study Gaussian
correlations.  Fix a positive $M\in \N$ and define $x_i\Def \tfrac{i}{M}$
for $i\in \{-M^2,-M^2+1\dots M^2\}$; set $\Xsp \Def \{x_i\}_{i=-M^2}^{M^2}$.
Define
\begin{equation*} \Phi(x) \Def \int_{t=-\infty}^x\frac{1}{\sqrt{2\pi}}\exp\left[-\frac{t^2}{2}\right]dt \qquad x\in \R \end{equation*}
as the standard Gaussian cumulative distribution function.
Define
\begin{equation*} p(x_i) \Def \begin{cases} \Phi\left(x_i+\frac{1}{2M}\right)-\Phi\left(x_i-\frac{1}{2M}\right) &\qquad \text{if $i\in \{-M^2+1,\dots M^2-1\}$} \\
\Phi\left(x_{-M^2}+\frac{1}{2M}\right) &\qquad \text{if $i=-M^2$}\\
1-\Phi\left(x_{M^2}-\frac{1}{2M}\right) &\qquad \text{if $i=M^2$}\end{cases}\end{equation*}
If we have a pool of $N$ names with common probability of default $p$ by time $T$ and we want to consider a Gaussian copula with
correlation $\rho>0$ (the case $\rho<0$ can be dealt with similarly),
we would take the $\mu(\cdot,x_i)$'s such that
\begin{equation*} \mu([0,T),x_i) \Def \Phi\left(\frac{\Phi^{-1}(p)-\rho x_i}{\sqrt{1-\rho^2}}\right). \end{equation*}
This is related to the calculations of \cite{Glasserman} and \cite{MR2384674};
those calculations are asymptotically related to our calculations.  We shall explore the connection with these two papers elsewhere.  We note, by way of contrast with \cite{Glasserman} and \cite{MR2384674}, that our efforts give a good picture of the \emph{dynamics} of the loss process prior to expiry.
We also note that our model of \eqref{E:corrprob} is entirely comfortable with non-Gaussian correlation.  Note also that one could also (by discretization) allow the systemic parameter $\xi^\system$ to be path-valued.

\subsection{Large Deviations}\label{S:LD}
We shall here give a very short summary of the main ideas of large deviations; see \cite{MR1619036} for a comprehensive treatment.  The basic observation behind
the theory is that a sum of exponentials behaves like largest-growing
exponential.  For example,
\begin{equation*} e^{-3N} + e^{7N} + e^{4N} = e^{7N}\lb 1 + e^{-10N} + e^{-3N}\rb \asymp e^{-7N}. \end{equation*}
Here ``$\asymp$'' means ``having the same exponential growth''; in other words,
$A_N\asymp B_N$ if $\lim_{N\to \infty}\tfrac1N\ln A_N=\lim_{N\to \infty}\tfrac1N\ln B_N$.
\emph{Laplace asymptotics} extends this to integrals.  This is a relevant place
to start the study of rare events if we consider a collection $\{X_n\}_{n\in \N}$
of random variables whose laws are of the form
\begin{equation}\label{E:laplace} \BP\{X_N\in A\} \Def \int_{x\in A}c_N\exp\left[-N \phi(x)\right]dx \qquad A\in \Borel(\R) \end{equation}
for some $\phi\in C(\R)$ and some normalization constant $c_N$ (e.g., if we take $\phi(x) = \tfrac12(x-1)^2$ and $c_N = 1/\sqrt{2\pi/N}$, then $X_N$ will be a
normal random variable with mean $1$ and variance $\tfrac{1}{N}$).  If we
assume that $\phi$ has nice enough growth properties (so that the integrals
in \eqref{E:laplace} are well-defined and $c_N$ has subexpontial growth) ,
then Laplace asymptotics states that
\begin{equation}\label{E:expasymp} \BP\{X_N\in A\} \asymp \exp\left[-N \inf_{x\in A}\phi(x)\right] \end{equation}
for ``nice'' enough sets $A$.  By taking $A=\R$, we see that
we must have that $\inf_{x\in \R}\phi(x)=0$.  If this minimum is achieved at a single point $x^*$, then by taking $A$ as the complement of a neighborhood of $x^*$
we have that $X_N\to x^*$ in probability, so $\{X_N\in A\}$ is a rare
event for any nice enough set $A$ not containing $x^*$.

One of the main aspects of large deviations theory is something of an inverse problem.  Can we have \eqref{E:expasymp} even without \eqref{E:laplace}?
In some cases, yes.  Fix $\theta\in \R$
and consider the limiting rate of growth of the logarithmic moment
generating function; we have that
\begin{equation*} \lim_{N\to \infty}\frac1N\ln \BE\left[\exp\left[\theta N X_N\right]\right] = \lim_{N\to \infty}\frac1N\ln \int_{x\in \R}c_N\exp\left[N\lb \theta x-\phi(x)\rb\right]dx = \sup_{x\in \R}\lb \theta x-\phi(x)\rb. \end{equation*}
The key realization is that the right-hand side is the \emph{Legendre-Fenchel transform} of $\phi$, and that if $\phi$ has nice convexity properties,
we can recover $\phi$ from $M$ by taking the Legendre-Fenchel transform
again; i.e., 
\begin{equation*} \phi(x) = \sup_{\theta\in \R}\lb \theta x-M(\theta)\rb. \end{equation*}
The strength of this chain of arguments is that the moment generating function is
well-defined (but of course possibly infinite) regardless of whether
$X_N$ is discrete or continuous.  It even makes sense when $X_N$ 
takes values in an infinite-dimensional topological linear space $\Xsp$
if we replace multiplication by $\theta$ with the action of a linear functional
on $\Xsp$.  The rigorous definition of a large deviations principle is as follows \cite{MR758258}.  We say that $\{X_n;\, n\in \N\}$ (which we now assume
to take values in a topological space $\Xsp$) has a large deviations principle with rate function $\calI:\Xsp\to [0,\infty]$ if the following three requirements hold:
\begin{itemize}
\item For every $s\ge 0$, $\{x\in \Xsp: \calI(x)\le s\}$ is a compact subset of $\Xsp$.
\item For every open subset $G$ of $\Xsp$,
\begin{equation*} \varliminf_{N\to \infty}\frac{1}{N}\ln \BP\{X_n\in G\} \ge - \inf_{x\in G}\calI(x). \end{equation*}
\item For every closed subset $F$ of $\Xsp$,
\begin{equation*} \varlimsup_{N\to \infty}\frac{1}{N}\ln \BP\{X_n\in F\} \le - \inf_{x\in F}\calI(x). \end{equation*}
\end{itemize}

Returning to our focus, which is Sanov's theorem applied to \eqref{E:nudef},
we have that for any $\phi\in C(I)$ (the dual of $\Pspace(I)$),
\begin{equation*} \lim_{N\to \infty}\frac1N\ln \BE\left[\exp\left[N\int_{t\in I}\phi(t)\nu^{(N)}(dt)\right]\right] = \ln \int_{t\in I}e^{\phi(t)}\mu(dt) \end{equation*}
and we can then show that
\begin{equation}\label{E:dual} H(\mu'|\mu) = \sup_{\phi\in C(I)}\lb \int_{t\in I}\phi(t)\mu'(dt) - \ln \int_{t\in I}e^{\phi(t)}\mu(dt)\rb. \qquad \mu'\in \Pspace(I)\end{equation}
This suggests that indeed we should have \eqref{E:san} as interpreted as a large
deviations principle (Sanov's theorem).

\section{A Measure Transformation}\label{S:Tilt}
One of the things which naturally occurs in proofs of large deviations
principles is a \emph{measure change} under which the unlikely event becomes more likely---the cost
of this change of measure is exactly the desired exponential rate of decay (see \cite{MR1619036}).
Let's see what this looks like in our situation (see \cite{MR1619036} for a more
complete motivation of measure changes in large deviations).  Define
\begin{equation*} \phi^*_\alpha(t) = \ln \frac{d\tilde \mu^*_\alpha}{d\mu}(t) = \ln \frac{\alpha}{F(T-)}\chi_{[0,T)}(t) + \ln \frac{1-\alpha}{1-F(T-)}\chi_{[T,\infty]}(t)\qquad t\in I \end{equation*}
(note that since $\fI(\alpha)<\infty$, $H(\tilde \mu^*_\alpha|\mu)<\infty$, so $\tilde \mu^*_\alpha \ll \mu$).  
It is easy to verify that
\begin{equation*} H(\tilde \mu^*_\alpha|\mu) = \int_{t\in I}\phi^*_\alpha(t)\tilde \mu^*_\alpha(dt) - \ln \int_{t\in I}e^{\phi^*_\alpha(t)}\mu(dt); \end{equation*}
thus $\phi^*_\alpha$ is the extremal in the variational representation \eqref{E:dual} for $H(\tilde \mu^*_\alpha|\mu)$ (if we allow ourselves
to extend the supremum over $C(I)$ to the collection of bounded measurable functions; it turns out that this is allowable).
In our analysis of $\nu^{(N)}$ of \eqref{E:nudef}, $\phi^*_\alpha$
will naturally give us an optimal way to ``tilt'' our original
probability measure so that it becomes likely that $\nu^{(N)}[0,T)\approx \alpha$.  The penalty for doing this is exactly $\fI(\alpha)$.
\begin{theorem}\label{T:measurechange}  We have that
\begin{equation*} \BE[\Prot_N] = I_Ne^{-N\fI(\alpha)} \end{equation*}
for all positive integers $N$, where
\begin{equation}\label{E:IDef} I_N\Def \tilde \BE_N\left[\Prot_N\exp\left[-\vkap\gamma_N\right]\chi_{\{\gamma_N>0\}}\right] \end{equation}
where in turn
\begin{equation}\label{E:MIX}\begin{aligned} \tilde \BP_N(A) &\Def \BE\left[\chi_A\prod_{n=1}^N \frac{d\tilde \mu^*_\alpha}{d\mu}(\tau_n)\right] \qquad A\in \filt \\
\gamma_N&= \sum_{n=1}^N \lb \chi_{[0,T)}(\tau_n)-\alpha\rb =N(L_{T-}^{(N)}-\alpha)\end{aligned}\end{equation}
Under $\tilde \BP_N$, $\{\tau_1,\tau_2\dots \tau_N\}$ are independent and identically distributed with common law $\tilde \mu^*_\alpha$.
\end{theorem}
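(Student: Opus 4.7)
The proof plan is a direct application of a change of measure, with the subtlety concentrated in identifying exactly how the log-likelihood ratio splits into the rate function $N\fI(\alpha)$ and the fluctuation term $\vkap\gamma_N$.

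First I would verify that $\tilde\BP_N$ is a probability measure on $(\Omega,\filt)$ under which $\{\tau_n\}_{n=1}^N$ are i.i.d.\ with law $\tilde\mu^*_\alpha$. Since the $\tau_n$ are i.i.d.\ $\mu$ under $\BP$ and $\tilde\mu^*_\alpha\ll\mu$ with $\tfrac{d\tilde\mu^*_\alpha}{d\mu}=e^{\phi^*_\alpha}$, the product density $\prod_{n=1}^N\tfrac{d\tilde\mu^*_\alpha}{d\mu}(\tau_n)$ is nonnegative and, by Fubini applied to the product law, integrates to $1$. The finite-dimensional distributions of $\{\tau_n\}_{n=1}^N$ under $\tilde\BP_N$ are then exactly $\prod_{n=1}^N\tilde\mu^*_\alpha$.

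Next I would invert the change of measure. For every bounded measurable $\Psi$ on $\Omega$,
\begin{equation*}
 \BE[\Psi]=\tilde\BE_N\!\left[\Psi\prod_{n=1}^N \frac{d\mu}{d\tilde\mu^*_\alpha}(\tau_n)\right]=\tilde\BE_N\!\left[\Psi\exp\!\left[-\sum_{n=1}^N \phi^*_\alpha(\tau_n)\right]\right].
\end{equation*}
The core algebraic step is to rewrite $\sum_{n=1}^N\phi^*_\alpha(\tau_n)$ in terms of $\gamma_N$. Writing $K_N\Def\sum_{n=1}^N\chi_{[0,T)}(\tau_n)=\gamma_N+N\alpha$, the explicit formula for $\phi^*_\alpha$ gives
\begin{equation*}
 \sum_{n=1}^N\phi^*_\alpha(\tau_n)=K_N\ln\frac{\alpha}{F(T-)}+(N-K_N)\ln\frac{1-\alpha}{1-F(T-)}.
\end{equation*}
Substituting $K_N=N\alpha+\gamma_N$ separates this into a deterministic piece equal to $N\hbar(\alpha,F(T-))=N\fI(\alpha)$ (by Proposition \ref{P:explicit}) and a fluctuation piece equal to $\gamma_N\bigl[\ln\tfrac{\alpha}{F(T-)}-\ln\tfrac{1-\alpha}{1-F(T-)}\bigr]=\vkap\gamma_N$, by the definition of $\vkap$.

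Taking $\Psi=\Prot_N$ then yields $\BE[\Prot_N]=e^{-N\fI(\alpha)}\tilde\BE_N[\Prot_N\exp[-\vkap\gamma_N]]$. To insert the indicator $\chi_{\{\gamma_N>0\}}$, I invoke \eqref{E:support}: $\Prot_N$ is supported on $\{L^{(N)}_{T-}>\alpha\}$, which is precisely $\{\gamma_N>0\}$ since $\gamma_N=N(L^{(N)}_{T-}-\alpha)$. Hence $\Prot_N=\Prot_N\chi_{\{\gamma_N>0\}}$ almost surely, giving the claimed identity $\BE[\Prot_N]=I_Ne^{-N\fI(\alpha)}$. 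The only real \emph{content} is the two-term split of the log density; everything else is bookkeeping, so I do not anticipate a serious obstacle, only the need to be careful that $\tilde\mu^*_\alpha\ll\mu$ (which follows from Assumption \ref{A:IG} forcing $F(T-)\in(0,1)$, itself a consequence of Assumption \ref{A:density} together with $\alpha<1$).
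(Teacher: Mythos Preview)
Your proposal is correct and follows essentially the same approach as the paper: both arguments hinge on writing the log-likelihood $\sum_{n=1}^N\phi^*_\alpha(\tau_n)$ as $N\fI(\alpha)+\vkap\gamma_N$ and then invoking \eqref{E:support} to insert the indicator. The only cosmetic difference is that the paper organizes the computation around the centered quantity $\Gamma_N=N\{\int\phi^*_\alpha\,d\nu^{(N)}-\int\phi^*_\alpha\,d\tilde\mu^*_\alpha\}$ and verifies $\Gamma_N=\vkap\gamma_N$, whereas you split the uncentered sum directly; the algebra is identical.
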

\begin{proof}  Set
\begin{equation*}
\Gamma_N =  N\lb \int_{t\in I}\phi^*_\alpha(t)\nu^{(N)}(dt)-\int_{t\in I}\phi^*_\alpha(t)\tilde \mu^*_\alpha(dt)\rb\end{equation*}
Then
\begin{equation*} \BE[\Prot_N] = \frac{\BE\left[\Prot_N \exp\left[-\Gamma_N\right]\exp\left[\Gamma_N\right]\right]}{\BE\left[\exp\left[\Gamma_N\right]\right]}\BE\left[\exp\left[\Gamma_N\right]\right]. \end{equation*}
Note that
\begin{align*} \int_{t\in I}\phi^*_\alpha(t)\tilde \mu^*_\alpha(dt)&= \fI(\alpha)\\
\exp\left[N \int_{t\in I}\phi^*_\alpha(t)\nu^{(N)}(dt)\right]&= \exp\left[\sum_{n=1}^N\ln \frac{d\tilde \mu^*_\alpha}{d\mu}(\tau_n)\right] = \prod_{n=1}^N \frac{d\tilde \mu^*_\alpha}{d\mu}(\tau_n) \\
\BE\left[\exp\left[\Gamma_N\right]\right]&= e^{-N\fI(\alpha)}\BE\left[\prod_{n=1}^N \frac{d\tilde \mu^*_\alpha}{d\mu}(\tau_n)\right] = e^{-N\fI(\alpha)} \end{align*}
(these equalities in fact reflect some of the basic properties of
large deviations measure transformations and are intimately related with
the fact that $\phi^*_\alpha$ solves the variational problem \eqref{E:dual} associated with $H(\tilde \mu^*_\alpha|\mu)$).
We also clearly have that
\begin{equation*}\frac{\BE\left[\chi_A\exp\left[\Gamma_N\right]\right]}{\BE\left[\exp\left[\Gamma_N\right]\right]}\\
= \frac{\BE\left[\chi_A\exp\left[N\int_{t\in I}\phi^*_\alpha(t)\nu^{(N)}(dt)\right]\right]}{\BE\left[\exp\left[N\int_{t\in I}\phi^*_\alpha(t)\nu^{(N)}(dt)\right]\right]} = \tilde \BP_N(A) \end{equation*}
for all $A\in \filt$.  The properties of $\tilde \BP_N$ are clear from the explicit formula.
We next check that
\begin{multline*}\Gamma_N=N\lb \ln \frac{\alpha}{F(T-)} \nu^{(N)}[0,T) + \ln \frac{1-\alpha}{1-F(T-)}\nu^{(N)}[T,\infty] 
-\frac{\alpha}{F(T-)} \alpha - \ln \frac{1-\alpha}{1-F(T-)}(1-\alpha)\rb\\
=N\ln \frac{\alpha}{F(T-)} \lb \nu^{(N)}[0,T)-\alpha\rb +N \ln \frac{1-\alpha}{1-F(T-)}\lb \nu^{(N)}[T,\infty]-(1-\alpha)\rb = \vkap\gamma_N. \end{multline*}
Finally, we see that $\Prot_N$ is nonzero only if $\gamma_N>0$; we have explicitly included this
in the expression for $I_N$.
\end{proof}
\noindent We note here that
\begin{equation*} \tilde \BE_N\left[L^{(N)}_{T-}\right] = \tilde \mu^*_\alpha[0,T)=\alpha\qquad \text{and}\qquad \tilde \BE_N\left[\left(L^{(N)}_{T-}-\alpha\right)^2\right] = \frac{\alpha(1-\alpha)}{N^2}\le \frac{1}{4N^2}, \end{equation*}
so by Chebychev's inequality, we have that 
\begin{equation*} \lim_{N\to \infty}\tilde \BP_N\lb \left|L^{(N)}_{T-}-\alpha\right|\ge \eps\rb = 0 \end{equation*}
for every $\eps>0$.  In other words, $L^{(N)}_{T-}$ tends to the attachment point $\alpha$ under the sequence $(\tilde \BP_N)_{N\in \N}$ of probability measures
and thus loss is not a rare event under $\tilde \BP_N$ as $N\nearrow \infty$.

We also note that we need to understand the appropriate
change of measure for the empirical measure $\nu^{(N)}$ (as opposed to the
change of measure for the empirical sum $L^{(N)}_{T-}$) since $\Prot_N$ involves the dynamics of the loss process (and not just the probability of loss).

\section{Asymptotic Analysis}
Where do we now stand?  If we can show that $I_N$ has no exponential growth or decay (comparable to $e^{-N\fI(\alpha)}$) then we have successfully identified the asymptotic behavior of $\BE[\Prot_N]$; we will have decomposed it into
an exponentially small part and a prefactor which is of order 1 as $N\nearrow \infty$.  Our goal now is to organize our thoughts about the prefactor,
and in particular to actually extract the asymptotics of Theorem \ref{T:Main};
i.e., to ``do the math''.

Looking at the expression \eqref{E:IDef} for $I_N$, we see that the dominant part of $I_N$ will be where $\gamma_N$ is order\footnote{actually, it will be where $\gamma_N\ll \sqrt{N}$} 1; if $\gamma_N\gg 1$, then $\exp[-\vkap \gamma_N]$ will be very small so the contribution to $I_N$ will be negligible (recall here that $\Prot_N$ is bounded).  This suggests
we organize the formula for $I_N$ based on the values of $\gamma_N$.
Note that the range of $\gamma_N$ when it is positive is
$\CS_N\Def \{n-N\alpha: \text{$n\in \Z$ and $N\alpha\le n\le N$}\}$.
\begin{definition} For each $N$, let $H_N:\CS_N\to [0,1]$ be such that
\begin{equation*} H_N(\gamma_N)= \tilde \BE_N\left[\Prot_N\big|\gamma_N\right] \end{equation*}
on $\{\gamma_N>0\}$.
\end{definition}
\noindent Then we have that
\begin{equation*} I_N = \tilde \BE_N\left[H_N(\gamma_N)\chi_{\{\gamma_N>0\}}\exp\left[-\vkap \gamma_N\right]\right]. \end{equation*}
\noindent It turns out that $H_N$ has very nice asymptotics.
\begin{lemma}\label{L:hasymp} For all $N$, we have that
\begin{equation*} H_N(s) = \frac{e^{-\rate T}s\lb 1 + \Err_1(s,N)\rb}{(\beta-\alpha)N} \end{equation*}
where
\begin{equation*} \varlimsup_{N\nearrow \infty}\sup_{\substack{s\in \CS_N \\ s\le N^{1/4}}}|\Err_1(s,N)|=0. \end{equation*}
\end{lemma}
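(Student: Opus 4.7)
The plan is to condition on $\gamma_N = s$. Under $\tilde \BP_N$, the $\tau_n$'s are i.i.d.\ with law $\tilde\mu^*_\alpha$, which places mass $\alpha$ on $[0,T)$ and mass $1-\alpha$ on $[T,\infty]$. The event $\{\gamma_N = s\}$ therefore coincides with the event that exactly $n := N\alpha + s$ of the $\tau_n$'s fall in $[0,T)$, and conditional on this event those $n$ ``in-tranche'' times are i.i.d.\ with the normalized law $\mu_0(\cdot) := \mu(\cdot\cap[0,T))/F(T-)$ on $[0,T)$. Write $t_{(1)} < \cdots < t_{(n)}$ for their order statistics and set $K := \granup$.

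For $s \le N^{1/4}$ we have $L^{(N)}_{T-} = \alpha + s/N \le \alpha + N^{-3/4} < \beta$ for all large $N$, so the tranche detachment is not activated. In this regime
$$\Prot_N = \sum_{k=K}^{n} \Delta_k\, e^{-\rate t_{(k)}},\qquad \sum_{k=K}^{n}\Delta_k = \frac{s}{N(\beta-\alpha)},$$
with all $\Delta_k\ge 0$ (the only nonuniform weight is the partial initial jump $\Delta_K = (K - N\alpha)/(N(\beta-\alpha))$; subsequent jumps are $1/(N(\beta-\alpha))$). Using the elementary bound $|e^{-\rate u}-e^{-\rate T}|\le \rate(T-u)$ on $[0,T]$ together with $T-t_{(k)}\le T-t_{(K)}$ for $k\ge K$,
$$\Big|\Prot_N - \frac{e^{-\rate T}\,s}{N(\beta-\alpha)}\Big| \le \frac{\rate\,s}{N(\beta-\alpha)}\bigl(T-t_{(K)}\bigr).$$
Taking conditional expectation and dividing by $e^{-\rate T}s/[N(\beta-\alpha)]$ shows that the claim reduces to proving
$$\Err_1(s,N) \;\le\; \rate\,e^{\rate T}\,\tilde\BE_N\bigl[T-t_{(K)}\,\big|\,\gamma_N=s\bigr] \;\xrightarrow[N\to\infty]{}\; 0 \quad\text{uniformly in } s\in\CS_N,\ s\le N^{1/4}.$$

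This uniform decay is the core of the argument and is where Assumption~\ref{A:density} enters. Fix $\eps>0$ and estimate
$$\tilde\BE_N[T-t_{(K)}\mid\gamma_N=s]\;\le\; \eps + T\,\tilde\BP_N\bigl(t_{(K)}<T-\eps\,\big|\,\gamma_N=s\bigr).$$
The order-statistic relation identifies
$$\tilde\BP_N\bigl(t_{(K)}<T-\eps\,\big|\,\gamma_N=s\bigr)\;=\;\tilde\BP\bigl(B\ge K\bigr),\qquad B\sim\text{Binomial}(n,p_\eps),$$
with $p_\eps := \mu_0([0,T-\eps))$. Assumption~\ref{A:density} rules out a flat of $F$ immediately to the left of $T$, so $p_\eps<1$ for every small $\eps>0$. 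Hence for $s\le N^{1/4}$ and $N$ large,
$$K - n p_\eps \;=\; N\alpha(1-p_\eps) + O(N^{1/4}) \;\ge\; \tfrac{1}{2}N\alpha(1-p_\eps),$$
so a standard Chernoff bound gives $\tilde\BP(B\ge K)\le e^{-c_\eps N}$ uniformly over admissible $s$. Letting $N\to\infty$ and then $\eps\downarrow 0$ completes the proof. The delicate point is precisely this last step: although the ``gap'' $n-K$ may range over a window of width $\sim N^{1/4}$, the ratio $K/n$ still tends to $1$, so Assumption~\ref{A:density} is essential to supply a fixed margin $1-p_\eps$ between the binomial mean $np_\eps$ and the threshold $K$, which in turn forces $t_{(K)}$ to cluster near $T$ uniformly in $s$.
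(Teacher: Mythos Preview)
Your argument is correct and follows the same overall outline as the paper: both reduce $H_N(s)$ to the main term $e^{-\rate T}s/((\beta-\alpha)N)$ plus an error controlled by $\tilde\BE_N[T-\tau^\alpha_N\mid\gamma_N=s]$ (your $t_{(K)}$ is exactly $\tau^\alpha_N$), and both finish with the same $\eps$--splitting. Where you genuinely diverge is in bounding this conditional expectation. The paper (its Lemma~\ref{L:TTimes}) introduces the reverse-time processes $\Zn_t=\chi_{\{\tau_n<T-t\}}$, proves a martingale-problem decomposition $\Zn=\chi_{\{\tau_n<T\}}+A^{(n)}+M^{(n)}$ (Lemma~\ref{L:wmx}), and then uses optional sampling plus independence to show $\tilde\BE_N\bigl[(\tilde M^{(N)}_{\tilde\vrho^\alpha_N})^2\mid\gilt_0\bigr]=O(1/N)$; this yields $\tilde\BE_N[\ff(\tilde\vrho^\alpha_N)\mid\gamma_N]\lesssim \gamma_N^+ + N^{-1/2}$. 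Your route is shorter and more elementary: you observe that, conditionally on $\{\gamma_N=s\}$, the $n=N\alpha+s$ pre-$T$ default times are an i.i.d.\ sample from $\mu(\cdot\cap[0,T))/F(T-)$, identify $\{t_{(K)}<T-\eps\}$ with a binomial tail $\{B\ge K\}$ where $B\sim\mathrm{Bin}(n,p_\eps)$ and $p_\eps=F((T-\eps)-)/F(T-)<1$ by Assumption~\ref{A:density}, and kill it with Chernoff. This even gives an exponentially small tail in $N$, sharper than the paper's $O(N^{-1/2})$ estimate. The trade-off is that the paper's martingale machinery is set up for the dynamics of $L^{(N)}$ and would more readily generalize beyond the i.i.d./exchangeable setting (a theme the paper flags for sequels), whereas your order-statistics argument leans squarely on conditional i.i.d.\ structure.
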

\noindent We will prove this in Section \ref{S:HAS}.

The next step is to understand the distribution of $\gamma_N$.
\begin{lemma}\label{L:probasymp} We have that
\begin{equation*} \tilde \BP_N\{\gamma_N=s\} = \frac{1+\Err_2(s,N)}{\sqrt{2\pi N\alpha(1-\alpha)}} \end{equation*}
for all $N$ and all $s\in \CS_N$, where
\begin{equation*} \varlimsup_{N\nearrow \infty}\sup_{\substack{s\in \CS_N \\ s\le N^{1/4}}}|\Err_2(s,N)|=0. \end{equation*}
\end{lemma}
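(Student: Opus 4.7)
The plan is to recognize $\tilde\BP_N\{\gamma_N = s\}$ as a binomial point probability and then apply a local central limit theorem in the central regime $0\le s\le N^{1/4}$. Under $\tilde\BP_N$, Theorem \ref{T:measurechange} tells us that $\tau_1,\dots,\tau_N$ are i.i.d.\ with common law $\tilde\mu^*_\alpha$, and $\tilde\mu^*_\alpha[0,T)=\alpha$ by \eqref{E:dacc}. Hence the indicators $\chi_{[0,T)}(\tau_n)$ are i.i.d.\ $\mathrm{Bernoulli}(\alpha)$, so
\begin{equation*}
j\Def \sum_{n=1}^N \chi_{[0,T)}(\tau_n) = NL^{(N)}_{T-}
\end{equation*}
is $\mathrm{Binomial}(N,\alpha)$ and $\gamma_N = j-N\alpha$. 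For $s\in\CS_N$, writing $k\Def s+N\alpha\in\Z\cap[\granup,N]$, I get the exact formula
\begin{equation*}
\tilde\BP_N\{\gamma_N = s\} = \binom{N}{k}\alpha^k(1-\alpha)^{N-k}.
\end{equation*}

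Next I apply Stirling's expansion $n! = \sqrt{2\pi n}(n/e)^n(1+O(1/n))$ to the three factorials in $\binom{N}{k}$. Since $k\ge N\alpha$ and $N-k\ge N(1-\alpha)-N^{1/4}$ are both of order $N$ in the relevant range, the $O(1/N)$ errors are uniform. Rearranging gives
\begin{equation*}
\binom{N}{k}\alpha^k(1-\alpha)^{N-k} = \frac{1}{\sqrt{2\pi}}\sqrt{\frac{N}{k(N-k)}}\left(\frac{N\alpha}{k}\right)^{\!k}\left(\frac{N(1-\alpha)}{N-k}\right)^{\!N-k}\bigl(1+O(1/N)\bigr).
\end{equation*}
For $0\le s\le N^{1/4}$ the algebraic prefactor satisfies $\sqrt{N/(k(N-k))} = (N\alpha(1-\alpha))^{-1/2}\bigl(1+O(s/N)\bigr)$, which is $(N\alpha(1-\alpha))^{-1/2}(1+o(1))$ uniformly.

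The core step is to expand the logarithm of the exponential factor. With $u=s/(N\alpha)$ and $v=-s/(N(1-\alpha))$, both $|u|$ and $|v|$ are bounded by $O(N^{-3/4})$, so $\ln(1+u) = u - u^2/2 + O(u^3)$ is valid, and after multiplying out the linear-in-$s$ contributions cancel. A direct computation yields
\begin{equation*}
k\ln\frac{N\alpha}{k}+(N-k)\ln\frac{N(1-\alpha)}{N-k} = -\frac{s^2}{2N\alpha(1-\alpha)} + O\!\left(\frac{s^3}{N^2}\right).
\end{equation*}
On the window $s\le N^{1/4}$ we have $s^2/N \le N^{-1/2}$ and $s^3/N^2\le N^{-5/4}$, both of which vanish uniformly, so exponentiating produces a factor $1+o(1)$ uniformly in $s$.

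Combining the three multiplicative contributions gives
\begin{equation*}
\tilde\BP_N\{\gamma_N = s\} = \frac{1+\Err_2(s,N)}{\sqrt{2\pi N\alpha(1-\alpha)}}, \qquad \varlimsup_{N\to\infty}\sup_{\substack{s\in\CS_N\\ s\le N^{1/4}}}|\Err_2(s,N)|=0.
\end{equation*}
This is essentially the de Moivre–Laplace local CLT; there is no serious obstacle, only bookkeeping. The one point worth emphasising is the choice of window: restricting to $s\le N^{1/4}$ keeps $s/\sqrt{N}\to 0$, which is precisely what is needed to replace the Gaussian factor $\exp(-s^2/(2N\alpha(1-\alpha)))$ by $1$ in the limit and to absorb the Stirling and Taylor remainders uniformly; further out in the tails one would have to retain the Gaussian factor (or, even further, large-deviation corrections), but Lemma \ref{L:hasymp} will make this window sufficient for the subsequent estimates of $I_N$.
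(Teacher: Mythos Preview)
Your proof is correct and follows essentially the same approach as the paper: identify $\tilde\BP_N\{\gamma_N=s\}$ as a binomial point mass, apply Stirling's formula to the factorials, and Taylor-expand the logarithm to show the $s$-dependent corrections vanish uniformly on $s\le N^{1/4}$. The only cosmetic difference is that the paper first factors the probability as $A(N)B(s,N)$ (splitting off the value at $s=0$) before applying Stirling separately to each piece, whereas you apply Stirling directly to $\binom{N}{k}$ and then expand; the underlying computation and error control are the same de Moivre--Laplace argument.
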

\noindent We will prove this in Section \ref{S:Proofs}.  Using this result, we can now start our proof of Theorem \ref{T:Main}.  Set
\begin{align*} \tilde I_{1,N} &\Def \sum_{\substack{s\in \CS_N\\s\le N^{1/4}}}(s-\alpha) e^{-\vkap s}\\
\tilde I_{2,N} &\Def \exp\left[-\vkap\left(\granup -N\alpha\right)\right]\lb \frac{e^{-\vkap}}{(1-e^{-\vkap})^2} +\frac{\granup-N\alpha}{1-e^{-\vkap}}\rb \end{align*}
We thus expect that
\begin{equation*} I_N \approx \frac{e^{-\rate T}\tilde I_{1,N}}{N^{3/2}(\beta-\alpha)\sqrt{2\pi \alpha(1-\alpha)}}. \end{equation*}
We then claim that $\tilde I_{1,N} \approx \tilde I_{2,N}$.  As a preliminary
to showing this, let's recall some calculations about geometric series.
For $\lambda>0$ and each positive integer $n$,
\begin{equation*} \sum_{j=0}^n e^{-\lambda j}= \frac{1}{1-e^{-\lambda}}- \frac{e^{-\lambda(n+1)}}{1-e^{-\lambda}}. \end{equation*}
Differentiating with respect to $\lambda$, we get that
\begin{equation*} \sum_{j=0}^n j e^{-\lambda j}  = \frac{e^{-\lambda}}{(1-e^{-\lambda})^2} -e^{-\lambda(n+1)}\frac{n(1-e^{-\lambda}) + 1}{(1-e^{-\lambda})^2}. \end{equation*}
Let's bound the error terms in these expressions.  Note that $\sup_{x>0}x e^{-x} = e^{-1}$.  For $\lambda>0$ we have that
\begin{align*} \left|e^{-\lambda(n+1)}\frac{n(1-e^{-\lambda}) + 1}{(1-e^{-\lambda})^2}\right|
&\le e^{-\lambda(n+1)}\frac{n+1}{(1-e^{-\lambda})^2}\\
&= 2\lb \frac{\lambda}{2} (n+1)\exp\left[-\frac{\lambda}{2}(n+1)\right]\rb \frac{\exp\left[-\frac{\lambda}{2}(n+1)\right]}{\lambda\left(1-e^{-\lambda}\right)^2} \\
&\le 2e^{-1} \frac{\exp\left[-\frac{\lambda}{2}(n+1)\right]}{\lambda\left(1-e^{-\lambda}\right)^2} \end{align*}
and similarly
\begin{equation*} \left|\frac{e^{-\lambda(n+1)}}{1-e^{-\lambda}}\right|
= 2 e^{-\lambda n/2}\left(1-e^{-\lambda}\right)\lb \frac{\lambda}{2}e^{-\lambda/2}\rb \frac{\exp\left[-\frac{\lambda}{2}(n+1)\right]}{\lambda\left(1-e^{-\lambda}\right)^2}\\
\le 2 e^{-1}\frac{\exp\left[-\frac{\lambda}{2}(n+1)\right]}{\lambda\left(1-e^{-\lambda}\right)^2}. \end{equation*}
Observe now that
\begin{equation*} \lfloor N\alpha + N^{1/4}\rfloor - \granup + 1
\ge N\alpha + N^{1/4} -1-N\alpha -1+1 =N^{1/4}-1  \end{equation*}
for all $N\in \N$.
Combining things and recalling that $\vkap>0$, we see that for all $N\in \N$,
\begin{equation} \label{E:QQ} \begin{aligned} \tilde I_{1,N} &= \sum_{\substack{j\in \Z\\0\le j-N\alpha \le N^{1/4}}}(j-N\alpha)\exp\left[-\vkap (j-N\alpha)\right]\\
&= \sum_{j=\granup}^{\lfloor N\alpha + N^{1/4}\rfloor}(j-N\alpha)\exp\left[-\vkap (j-N\alpha)\right]\\
&= \sum_{j=0}^{\lfloor N\alpha +N^{1/4}\rfloor-\granup}(j+\granup-N\alpha)\exp\left[-\vkap (j+\granup-N\alpha)\right]\\
&= \exp\left[-\vkap (\granup-N\alpha)\right]\lb \sum_{j=0}^{\lfloor N\alpha +N^{1/4}\rfloor-\granup}j e^{-\vkap j} + \left(\granup-N\alpha\right)\sum_{j=0}^{\lfloor N\alpha +N^{1/4}\rfloor-\granup}e^{-\vkap j}\rb \\
&= \exp\left[-\vkap(\granup-N\alpha)\right]\lb \frac{e^{-\vkap}}{(1-e^{-\vkap})^2} + \left(\granup-N\alpha\right)\frac{1}{1-e^{-\vkap}} + \Err_3(N)\rb
\end{aligned}\end{equation}
where
\begin{equation}\label{E:QQQ} |\Err_3(N)| \le 4e^{-1}\frac{\exp\left[-\frac{\vkap}{2}(N^{1/4}-1)\right]}{\vkap \left(1-e^{-\vkap}\right)^2}. \end{equation}
As a consequence, we furthermore have that
\begin{multline*}\left|\tilde I_{1,N}\right|
\le \frac{e^{-\vkap}}{\left(1-e^{-\vkap}\right)^2} + \frac{1}{1-e^{-\vkap}} + 4e^{-1}\frac{\exp\left[-\frac{\vkap}{2}(N^{1/4}-1)\right]}{\vkap \left(1-e^{-\vkap}\right)^2}\\ 
\le \frac{\vkap\left(e^{-\vkap}+1-e^{-\vkap}\right) + 4e^{-1}}{\vkap\left(1-e^{-\vkap}\right)^2}
\le \frac{4e^{-1}(1+\vkap)}{\vkap\left(1-e^{-\vkap}\right)^2}. \end{multline*}
From \eqref{E:vkapdef}, we have that
\begin{equation*} e^{-\vkap} = \frac{1-\alpha}{\alpha}\frac{F(T-)}{1-F(T-)}\qquad \text{and}\qquad 1-e^{-\vkap}= \frac{\alpha-F(T-)}{\alpha(1-F(T-))}. \end{equation*}
so
\begin{equation*} \frac{e^{-\vkap}}{(1-e^{-\vkap})^2} = \frac{1-\alpha}{\alpha}\frac{F(T-)}{1-F(T-)}\frac{\alpha^2(1-F(T-))^2}{(\alpha-F(T-))^2} = \frac{\alpha(1-\alpha)F(T-)(1-F(T-))}{(\alpha-F(T-))^2}. \end{equation*}

We can finally prove our desired result.
\begin{proof}[Proof of Theorem \ref{T:Main}]
We have that
\begin{equation*} I_N = \frac{\tilde I_{2,N}}{N^{3/2}(\beta-\alpha)\sqrt{2\pi \alpha(1-\alpha)}} + \sum_{j=1}^5 \tilde \Err_j(N)\end{equation*}
where
\begin{align*}
\tilde \Err_1(N)&\Def \tilde \BE_N\left[\Prot_N e^{-\vkap \gamma_N}\chi_{\{\gamma_N>N^{1/4}\}}\right] \\
\tilde \Err_2(N)&\Def \sum_{\substack{s\in \CS_N\\s\le N^{1/4}}}\frac{H_N(s)e^{-\vkap s}\Err_2(s,N)}{\sqrt{2\pi N\alpha(1-\alpha)}} \\
\tilde \Err_3(N)&\Def \frac{e^{-\rate T}}{\beta-\alpha}\sum_{\substack{s\in \CS_N\\s\le N^{1/4}}}\frac{se^{-\vkap s}\Err_1(s,N)}{N^{3/2}\sqrt{2\pi \alpha(1-\alpha)}}\\
\tilde \Err_4(N)&\Def \frac{e^{-\rate T}}{\beta-\alpha}\frac{\exp\left[-\vkap\left(\granup-N\alpha\right)\right]\Err_3(N)}{N^{3/2}\sqrt{2\pi \alpha(1-\alpha)}}\end{align*}
Then there is a $\KK_1>0$ such that 
\begin{equation*} |\tilde \Err_1(N)|\le \frac{1}{\KK_1}e^{-\KK_1 N^{1/4}}\qquad \text{and}\qquad |\tilde \Err_4(N)|\le \frac{1}{\KK_1}e^{-\KK_1 N^{1/4}}\end{equation*}
for all $N\in \N$.
Furthermore, we can fairly easily see that there is a $\KK_2$ such that
\begin{equation*}
|\tilde \Err_2(N)|\le \frac{\KK_2\tilde I_{1,N}}{N^{3/2}} \sup_{\substack{s\in \CS_N \\ s\le N^{1/4}}}|\Err_2(s,N)|\qquad \text{and}\qquad |\tilde \Err_3(N)|\le \frac{\KK_2 \tilde I_{1,N}}{N^{3/2}}\sup_{\substack{s\in \CS_N \\ s\le N^{1/4}}}|\Err_1(s,N)| \end{equation*}
for all $N\in \N$ (note from \eqref{E:QQ} and \eqref{E:QQQ} that $\tilde I_{1,N}$ is uniformly bounded in $N$).  Combine things together to get the stated result.\end{proof}

\begin{remark}\label{R:Comments}  Several comments are in order about the analysis of this section.

Firstly, we re-emphasize that we first identified the law of $L^{(N)}_{T-}$ and then studied the law of $L^{(N)}$ right before $T$.
For investment-grade tranches, only this last part of $L^{(N)}$ should be
of interest. For an investment-grade tranche, losses in general should be
rare events; losses significantly before expiry should be \emph{very} rare
events.  This would follow from a detailed analysis of the measure transformation
of Section \ref{S:Tilt}.

Secondly, our analysis here suggests that in more realistic models
(i.e., not i.i.d. names), the first order of business should be a thorough
study of the law of $L^{(N)}_{T-}$.  This is somewhat appealing; by time $T$,
various transients will assumedly have died out, and some sort of macroscopic
analysis may be available.

The third point of interest is the asymptotics of Lemma \ref{L:probasymp}.
This does \emph{not} directly reflect a Poisson distribution for $L^{(N)}_{T-}$.
A number of other studies of CDO's have modelled the loss process as a Poisson
process; an interesting question would thus be to try to find a limiting regime of our calculations which leads to Poisson statistics.
\end{remark}

Finally, it would not be hard to use the measure change of Section \ref{S:Tilt}
and calculations similar to those of this section
to compute the expected loss given default.  We will leave that to the reader.

\section{Proof of Lemma \ref{L:hasymp}}\label{S:HAS}
We here prove Lemma \ref{L:hasymp}.  To do so, we need to develop a clear picture of the dynamics of $L^{(N)}$.  We note that the calculations of this section, though technical, provide a \emph{direct} link to the distribution of the default times.

First of all, we recall that the definition of $\tL^{(N)}$ implies that $\tL^{(N)}$ is nonzero only where $L^{(N)}$ exceeds $\alpha$; since $L^{(N)}$ is nondecreasing, this will in fact be an interval.  Set
\begin{align*} \tau^\alpha_N &\Def \inf\{r>0: \tL^{(N)}_r>0\} = \inf\{r>0: L^{(N)}_r>\alpha\} \\
\tau^\beta_N &\Def \sup\{r>0: \tL^{(N)}_r<\beta-\alpha\} = \sup\{r>0: L^{(N)}_r<\beta\} \end{align*}
A typical graph of $\tL^{(N)}$ is given in Figure \ref{fig:typ}.
Next note that on $\{\gamma_N>0\}$,
\begin{equation}\label{E:AA} \Prot_N = \int_{s\in [\tau^\alpha_N,\tau^\beta_N]\cap[0,T)}e^{-\rate s}d\tL^{(N)}_s. \end{equation}
If $\gamma_N=s$ for some $s\in \CS_N$, where $s\le N^{1/4}$, then (recall the second line of \eqref{E:MIX}) $L^{(N)}_{T-}=\alpha+\frac{s}{N}$ and $\frac{s}{N}\ll 1$; thus $L^{(N)}_{T-}$ is close to $\alpha$.  Hence $\tau^\beta_N>T$ (at least if $N>(\beta-\alpha)^{-4/3}$)
and $\tau^\alpha_N$ should be close to $T$; it should only take a short amount
of time for $L^{(N)}$ to increase the extra distance (which is at most $s/N$) past $\alpha$.
\begin{lemma}\label{L:TTimes} We have that
\begin{equation*} \varlimsup_{N\nearrow \infty}\sup_{\substack{s\in \CS_N \\ s\le N^{1/4}}}\tilde \BE_N\left[T-\tau^\alpha_N\bigg|\gamma_N\right]\chi_{\{\gamma_N=s\}}=0. \end{equation*}
\end{lemma}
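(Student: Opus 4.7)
The plan is to identify the conditional law of $\tau^\alpha_N$ under $\tilde\BP_N$ as an order statistic of i.i.d.\ samples from the restriction of $\mu$ to $[0,T)$, and then to control the tail $\{\tau^\alpha_N<T-\delta\}$ by Chebyshev's inequality together with Assumption \ref{A:density}.

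\emph{Step 1: conditional law.} Under $\tilde\BP_N$ the times $\tau_1,\ldots,\tau_N$ are i.i.d.\ with law $\tilde\mu^*_\alpha$. By \eqref{E:dacc} the conditional law of $\tau_n$ given $\tau_n\in[0,T)$ is $\nu\Def\mu(\cdot\cap[0,T))/F(T-)$. On $\{\gamma_N=s\}$ there are exactly $k\Def N\alpha+s$ of the $\tau_n$'s in $[0,T)$ (which is an integer because $s\in\CS_N$), and $k\ge\granup$. By independence and exchangeability, conditional on this event the values in $[0,T)$ are i.i.d.\ samples $V_1,\ldots,V_k$ from $\nu$; thus $\tau^\alpha_N$ coincides (in conditional distribution) with the $\granup$-th order statistic $V_{(\granup)}$, and $k-\granup\le s$.

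\emph{Step 2: tail bound.} For any $\delta\in(0,T]$,
\[
\tilde\BE_N\bigl[T-\tau^\alpha_N\,\big|\,\gamma_N=s\bigr]\le \delta + T\cdot\BP\bigl(V_{(\granup)}<T-\delta\bigr).
\]
The event $\{V_{(\granup)}<T-\delta\}$ forces at least $\granup$ of the $V_i$'s to lie in $[0,T-\delta)$, i.e.\ at most $k-\granup$ of them to lie in $[T-\delta,T)$. Writing $X_\delta\Def\#\{i\le k:V_i\ge T-\delta\}\sim\text{Binomial}(k,q_\delta)$ with $q_\delta\Def\mu([T-\delta,T))/F(T-)$, this places the event inside $\{X_\delta\le N^{1/4}\}$. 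By Assumption \ref{A:density}, $F(T-\delta)<F(T-)$, so $q_\delta>0$. Since $\BE[X_\delta]=kq_\delta\ge N\alpha q_\delta$, once $N$ is large enough that $N\alpha q_\delta\ge 2N^{1/4}$, Chebyshev yields
\[
\BP\bigl(X_\delta\le N^{1/4}\bigr)\le \frac{4\,\text{Var}(X_\delta)}{(\BE[X_\delta])^2}\le \frac{4}{kq_\delta}\le \frac{4}{N\alpha q_\delta},
\]
and this estimate is uniform in $s\in\CS_N$ with $s\le N^{1/4}$.

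\emph{Step 3: conclusion.} Given $\eta>0$, choose $\delta=\eta/(2T)$ (so that $q_\delta>0$ by Assumption \ref{A:density}), and then take $N$ so large that $4T/(N\alpha q_\delta)\le\eta/2$. Combining the two steps gives $\sup_{s\in\CS_N,\,s\le N^{1/4}}\tilde\BE_N[T-\tau^\alpha_N\,|\,\gamma_N=s]\le\eta$, and since $\eta>0$ was arbitrary the lemma follows. The only delicate point is uniformity in $s$, but this is automatic: the binomial mean $kq_\delta=(N\alpha+s)q_\delta$ is minimized at the smallest admissible $s$, so a single Chebyshev bound based on $k\ge N\alpha$ handles every $s\in[0,N^{1/4}]$ simultaneously. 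The role of Assumption \ref{A:density} is thus to guarantee that for \emph{every} scale $\delta>0$ at which we want $T-\tau^\alpha_N$ to be small, the distribution $\mu$ actually puts positive mass on $[T-\delta,T)$, forcing $k\gg 1$ samples into that strip on average.
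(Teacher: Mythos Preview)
Your proof is correct and follows a genuinely different route from the paper's. The paper constructs a reverse-time martingale decomposition of $L^{(N)}$ (via the auxiliary Lemma~\ref{L:wmx}, whose proof occupies nearly two pages), then applies an $L^2$-martingale/optional-sampling bound to control the fluctuation term and compares the bounded-variation part to $\ff(\eps)=F(T-)-F((T-\eps)-)$. You bypass all of this by observing that, conditional on $\gamma_N=s$, the hitting time $\tau^\alpha_N$ is simply an order statistic of $k=N\alpha+s$ i.i.d.\ samples from $\mu(\,\cdot\,\cap[0,T))/F(T-)$, and then controlling the tail $\{\tau^\alpha_N<T-\delta\}$ by a single Chebyshev bound on the binomial count $X_\delta$. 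This is considerably shorter and more elementary; the paper's martingale approach, by contrast, is more structural and would likely generalize more readily to settings without exchangeability (e.g., the heterogeneous pools advertised for the sequel). Two minor points: the correct index for $\tau^\alpha_N$ is $\grandn+1$ rather than $\granup$ (these differ when $N\alpha\in\Z$), but since $V_{(\granup)}\le V_{(\grandn+1)}$ your tail inclusion and hence your bound still hold; and in Step~3 you presumably mean $\delta=\eta/2$ rather than $\eta/(2T)$. Both proofs invoke the same consequence of Assumption~\ref{A:density}, namely $\mu([T-\delta,T))>0$ for every $\delta\in(0,T)$.
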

\noindent Let's rigorously put all of these thoughts together.  Assume
that $N> (\beta-\alpha)^{-4/3}$ and $0<\gamma_N\le N^{1/4}$.  Then
$0\le \tau^\alpha_N\le T\le \tau^\beta_N$ and $0\le L^{(N)}_{\tau^\alpha_N-}\le L^{(N)}_{T-}= \alpha+\tfrac{\gamma_N}{N}<\beta$.  Hence
\begin{equation*} \int_{s\in [\tau^\alpha_N,\tau^\beta_N]\cap [0,T)}e^{-\rate s}d\tL^{(N)}_s
= e^{-\rate T}\left(\tL^{(N)}_{T-}-\tL^{(N)}_{\tau^\alpha_N-}\right) + \int_{s\in [\tau^\alpha_N,T)}\left(e^{-\rate s}-e^{-\rate T}\right) d\tL^{(N)}_s. \end{equation*}
Note that
\begin{equation*} \tL^{(N)}_{T-}=\frac{L^{(N)}_{T-}-\alpha}{\beta-\alpha} = \frac{1}{\beta-\alpha}\frac{\gamma_N}{N} \qquad \text{and}\qquad \tL^{(N)}_{\tau^\alpha_N-}=\frac{\left(L^{(N)}_{\tau^\alpha_N-}-\alpha\right)^+}{\beta-\alpha}. \end{equation*}
Thus
\begin{equation*} \int_{s\in [\tau^\alpha_N,\tau^\beta_N]\cap [0,T)}e^{-\rate s}d\tL^{(N)}_s = \frac{e^{-\rate T}}{\beta-\alpha}\frac{\gamma_N}{N} + \err_N \end{equation*}
where
\begin{equation*} \err_N = -e^{\rate T}\frac{\left(L^{(N)}_{\tau^\alpha_N-}-\alpha\right)^+}{\beta-\alpha} + \int_{s\in [\tau^\alpha_N,T)}e^{-\rate s}\lb 1- e^{-\rate (T-s)}\rb d\tL^{(N)}_s. \end{equation*}
If $\tau^\alpha_N>0$, then $L^{(N)}_{\tau^\alpha_N-}\le \alpha$.  Thus
\begin{equation*} \left(L^{(N)}_{\tau^\alpha_N-}-\alpha\right)^+ \le \frac{\gamma_N}{N}\chi_{\{\tau^\alpha_N=0\}}
=\frac{\gamma_N}{N}\chi_{\{T-\tau^\alpha_N=T\}}
\le \frac{1}{T}\frac{\gamma_N}{N}\left(T-\tau^\alpha_N\right). \end{equation*}
Similarly,
\begin{multline*} 0\le \int_{s\in [\tau^\alpha_N,T)}e^{-\rate s}\lb 1- e^{-\rate (T-s)}\rb d\tL^{(N)}_s \le \rate(T-\tau^\alpha_N)\left(\tL^{(N)}_{T-}-\tL^{(N)}_{\tau^\alpha_N-}\right)
\le \frac{\rate}{\beta-\alpha}(T-\tau^\alpha_N)\left(L^{(N)}_{T-}-\alpha\right)\\
= \frac{\rate}{\beta-\alpha}(T-\tau^\alpha_N)\frac{\gamma_N}{N} \end{multline*}
(we use here the fact that $\tL_{\tau^\alpha_N-}\ge 0$ and that $e^{-x}\ge 1-x$ for all $x\ge 0$).
Combining things, we get that
\begin{equation*} |\err_N|\le \frac{1}{\beta-\alpha}\lb \frac{1}{T}+\rate\rb(T-\tau^\alpha_N)\frac{\gamma_N}{N} \end{equation*}
on $\lb 0<\gamma_N\le N^{1/4}\rb$ if $N>(\beta-\alpha)^{-4/3}$.
We then have
\begin{proof}[Proof of Lemma \ref{L:hasymp}] For $s\in \CS_N$ such that $s\le N^{1/4}$, we have that
\begin{equation*} \Err_1(s,N)= (\beta-\alpha)e^{\rate T}\frac{\tilde \BE_N\left[\err_N\big|\gamma_N\right]}{\frac{\gamma_N}{N}}\chi_{\{\gamma_N=s\}} \le e^{\rate T}\lb \frac{1}{T} + \rate\rb \BE_N[T-\tau^\alpha_N|\gamma_N]\chi_{\{\gamma_N=s\}} \end{equation*}
if $N>(\beta-\alpha)^{-4/3}$.
Combine \eqref{E:AA}, the preceding calculations, and Lemma \ref{L:TTimes}. \end{proof}

We now need to prove Lemma \ref{L:TTimes}.  This is a moderately complex step.
The first problem is that by conditioning on $\gamma_N$, we are conditioning
on the value of $L^{(N)}$ near the \emph{endpoint} of the interval $[0,T)$
of interest.  The second problem is that we have a large amount
of randomness; $L^{(N)}$ can be decomposed into $N$ (independent) processes,
one corresponding to each name.

We shall resolve these issues by using the martingale problem to decompose
$L^{(N)}$ into a (reverse-time) zero-mean martingale and a term of bounded
variation\footnote{Much of our notation will thus be in reverse time.}.  We will use a martingale inequality to show that the martingale
part is small.  Thus the behavior of $L^{(N)}$ near $T$ will be given by
the bounded-variation part, which we can analyze via straightforward calculations.

Define now
\begin{equation*} \Zn_t \Def \chi_{\{\tau_n< T-t\}}=\chi_{(t,\infty]}(T-\tau_n) \qquad t\in [0,T) \end{equation*}
for each positive integer $n$ (note that the $\Zn$'s are right-continuous).
Also define $\gilt_t \Def \sigma\{\Zn_s: 0\le s\le t,\, n\in \{1,2\dots\}\}$
for all $t\in [0,T)$.  Observe that
\begin{equation*} L^{(N)}_{t-} = \frac{1}{N}\sum_{n=1}^N \chi_{[0,t)}(\tau_n)=\frac{1}{N}\sum_{n=1}^N \Zn_{T-t}. \end{equation*}
for all $t\in (0,T]$.

Let's now localize in time.  Let $T^*\in (0,T)$ be such that $F((T-T^*)-)>0$;
Assumption \ref{A:density} ensures that this is possible.
For all $t\in [0,T^*]$, define
\begin{align*} 
A^{(n)}_t &= -\int_{r\in [T-t,T)} \frac{1}{F(r)} \Zn_{(T-r)-}dF_r\\
M^{(n)}_t&\Def \Zn_t-\chi_{\{\tau_n<T\}}-A^{(n)}_t \end{align*}
(essentially, $A^{(N)}$ is the integral of the hazard function).
For future reference, we calculate that for any $t\in [0,T)$,
\begin{equation*} \Zn_{(T-t)-} = \lim_{s\searrow t}\Zn_{T-s} = \lim_{s\searrow t}\chi_{\{\tau_n<s\}} = \lim_{s\searrow t}\chi_{(0,s)}(\tau_n) = \chi_{(0,t]}(\tau_n). \end{equation*}
Note that by definition of $T^*$, 
\begin{equation}\label{E:Abound} \left|\frac{1}{F(r-)}\Zn_{(T-r)-}\right|\le \frac{1}{F((T-T^*)-)}<\infty \end{equation}
for all $r\in [T-T^*,T)$; thus $A^{(n)}$ is well-defined, finite, right-continuous, and it has left-hand limits.
\begin{lemma}\label{L:wmx} For every $n\in \{1,2\dots N\}$, $M^{(n)}$ is a $\tilde \BP_N$-zero-mean-martingale with
respect to $\{\gilt_t;\, t\in [0,T^*]\}$; i.e., for $0\le s\le t\le T^*$,
$\tilde \BE_N[M^{(n)}_t|\gilt_s]=M^{(n)}_s$.
\end{lemma}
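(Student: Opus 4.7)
The plan is to verify the martingale identity $\tilde\BE_N[M^{(n)}_t - M^{(n)}_s \mid \gilt_s] = 0$ for $0 \le s \le t \le T^*$ by a direct conditional-expectation calculation. First I would use $\tilde\BP_N$-independence of the $\tau_n$'s (they are i.i.d.\ with common law $\tilde\mu^*_\alpha$) to reduce conditioning on $\gilt_s$ to conditioning on the single-name $\sigma$-algebra $\sigma(Z^{(n)}_u : u \le s)$. This filtration splits $\Omega$ into $\{\tau_n \ge T\}$, $\{\tau_n \in [T-s, T)\}$ (on which the jump of $Z^{(n)}$ by reverse-time $s$ determines $\tau_n$ exactly), and $\{\tau_n < T-s\} = \{Z^{(n)}_s = 1\}$. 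Crucially, since $\tilde\mu^*_\alpha|_{[0,T)} = (\alpha/F(T-))\,\mu|_{[0,T)}$, the proportionality cancels in any conditional density, so on $\{\tau_n < T-s\}$ the $\tilde\BP_N$-conditional law of $\tau_n$ is simply $\mu|_{[0, T-s)}/F((T-s)-)$ (the same as under $\BP$).

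On $\{Z^{(n)}_s = 0\}$ I would argue that both increments vanish pathwise: $\tau_n \ge T - s > T - t$ forces $Z^{(n)}_t = 0 = Z^{(n)}_s$, and for $r \in [T-t, T-s)$ the integrand $Z^{(n)}_{(T-r)-} = \chi_{\{\tau_n \le r\}}$ is zero since $\tau_n \ge T - s > r$, so $A^{(n)}_t = A^{(n)}_s$ as well; the identity is trivial there. On $\{Z^{(n)}_s = 1\}$ the $Z$-side follows directly from the conditional law,
\begin{equation*}
\tilde\BE_N[Z^{(n)}_t - Z^{(n)}_s \mid \gilt_s] = -\tilde\BE_N[\chi_{\{T-t \le \tau_n < T-s\}} \mid \gilt_s] = -\frac{F((T-s)-) - F((T-t)-)}{F((T-s)-)}.
\end{equation*}

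For the $A$-side, Fubini (justified by the uniform bound \eqref{E:Abound} on $[T-T^*, T)$) gives
\begin{equation*}
\tilde\BE_N[A^{(n)}_t - A^{(n)}_s \mid \gilt_s] = -\int_{[T-t, T-s)} \frac{\tilde\BP_N(\tau_n \le r \mid \tau_n < T-s)}{F(r)}\, dF(r).
\end{equation*}
The key cancellation is that for $r < T-s$ we have $\{\tau_n \le r\} \subset \{\tau_n < T-s\}$, so the conditional probability equals $F(r)/F((T-s)-)$; the factor of $F(r)$ exactly cancels the $1/F(r)$ in the integrand, leaving $-[F((T-s)-) - F((T-t)-)]/F((T-s)-)$, which matches the $Z$-side. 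The hard part will be recognizing this cancellation; it is the precise sense in which the compensator $A^{(n)}$ has been built from the reverse-time hazard $dF(r)/F(r)$ so as to compensate $Z^{(n)} - Z^{(n)}_0$. Mild care with atoms of $F$ in $[T-T^*, T)$ (distinguishing $F(r)$ from $F(r-)$) is needed, but the Fubini-Stieltjes manipulation respects atoms symmetrically and does not affect the final identity.
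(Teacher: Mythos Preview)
Your proposal is correct and essentially covers the same ground as the paper's proof, but the route to the compensator identity is genuinely different. Both arguments reduce, via the $\tilde\BP_N$-independence of the names, to the single-name filtration and compute
\[
\tilde\BE_N\bigl[Z^{(n)}_t-Z^{(n)}_s\,\big|\,\gilt_s\bigr]
= \frac{F((T-t)-)-F((T-s)-)}{F((T-s)-)}\,Z^{(n)}_s,
\]
which is exactly the paper's display \eqref{E:LZ}. Where you diverge is in matching this to the $A^{(n)}$-increment: you take the conditional expectation of $A^{(n)}_t-A^{(n)}_s$ directly, swap $\tilde\BE_N[\,\cdot\mid\gilt_s]$ with the $dF$-integral by Fubini (legitimate thanks to the uniform bound \eqref{E:Abound}), and observe the cancellation $\tilde\BP_N(\tau_n\le r\mid \tau_n<T-s)/F(r)=1/F((T-s)-)$. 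The paper instead partitions $[s,t]$ into a mesh $\{r^m_k\}$, writes $Z^{(n)}_t-Z^{(n)}_s$ as a telescoping sum of conditional increments plus a martingale-difference sum $\CM_m$ with $\tilde\BE_N[\CM_m\chi_C]=0$, and then passes to the limit $m\to\infty$ via dominated convergence to recover $-(A^{(n)}_t-A^{(n)}_s)$. Your Fubini argument is shorter and exposes more clearly \emph{why} the specific form of $A^{(n)}$ (the reverse-time hazard integral) is the right compensator: the $1/F(r)$ is there precisely to cancel the conditional probability. The paper's discretization is closer in spirit to a constructive Doob--Meyer decomposition and would transplant more mechanically to settings where the compensator is not known in closed form. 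Either way the zero-mean property follows from $M^{(n)}_0=Z^{(n)}_0-\chi_{\{\tau_n<T\}}-A^{(n)}_0=0$, which you may want to state explicitly.
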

\begin{proof} Fix $n$ as specified.   Clearly $M^{(n)}$ is adapted to $\{\gilt_t;\, t\in [0,T^*]\}$.
By \eqref{E:Abound}, we have that $A^{(n)}$ is also bounded, so $M^{(n)}_t$ is $\tilde \BP_N$-integrable for each $t\in [0,T^*]$.

We next compute some transition probabilities.  Fix $s$ and $t$ in $[0,T^*]$ such that $s\le t$.  Then $(t,\infty]\subset (s,\infty]$, so $\Zn_t\le \Zn_s$; hence $\Zn$ is nonincreasing.  This implies that
\begin{equation} \label{E:Zmon} \{\Zn_s=0\}\subset \{\Zn_t=0\} \qquad \text{and}\qquad  \{\Zn_t=1\}\subset \{\Zn_s=1\}. \end{equation}

Fix $0\le s_1<s_2\dots s_n\le s$ and $\{z_n\}_{n=1}^n\subset \{0,1\}$.  From
\eqref{E:Zmon}, we immediately have that
\begin{align*} &\tilde \BP_N\lb \Zn_t=0,\, \Zn_s=0,\, \Zn_{s_1}=z_1,\, \Zn_{s_2}=z_2\dots \Zn_{s_n}=z_n\rb \\
&\qquad =\tilde \BP_N\lb \Zn_s=0,\, \Zn_{s_1}=z_1,\, \Zn_{s_2}=z_2\dots \Zn_{s_n}=z_n\rb \\
&\tilde \BP_N\lb \Zn_t=1,\, \Zn_s=0,\, \Zn_{s_1}=z_1,\, \Zn_{s_2}=z_2\dots \Zn_{s_n}=z_n\rb =0. \end{align*}
A similar computation which also uses the definition of $\Zn$ gives us that
\begin{multline*} \tilde \BP_N\lb \Zn_t=1,\, \Zn_s=1,\, \Zn_{s_1}=z_1,\, \Zn_{s_2}=z_2\dots \Zn_{s_n}=z_n\rb \\
= \tilde \BP_N\lb \Zn_t=1,\, \Zn_{s_1}=z_1,\, \Zn_{s_2}=z_2\dots \Zn_{s_n}=z_n\rb
=\tilde \BP_N\{\tau_n< T-t\} \prod_{k=1}^n \delta_1(\{z_k\})\end{multline*}
A final computation (again using the definition of $\Zn$) gives us that
\begin{equation*} \tilde \BP_N\lb \Zn_t=0,\, \Zn_s=1,\, \Zn_{s_1}=z_1,\, \Zn_{s_2}=z_2\dots \Zn_{s_n}=z_n\rb =\tilde \BP_N\{T-t\le \tau_n< T-s\} \prod_{k=1}^n \delta_1(\{z_k\})\end{equation*}

With some manipulations, and using the fact that the $\Zn$'s are $\tilde \BP_N$-independent, we get that
\begin{equation}\label{E:condprob}\begin{aligned} \tilde \BP_N\lb \Zn_t=0\big| \gilt_s\rb &= \chi_{\{0\}}(\Zn_s)+\frac{\tilde \BP_N\{T-t\le \tau_n< T-s\}}{\tilde \BP_N\{\tau_n<T-s\}}\chi_{\{1\}}(\Zn_s)\\
\tilde \BP_N\lb \Zn_t=1\big| \gilt_s\rb &= \frac{\tilde \BP_N\{\tau_n< T-t\}}{\tilde \BP_N\{\tau_n< T-s\}}\chi_{\{1\}}(\Zn_s). \end{aligned}\end{equation}
Since $s<T^*$, 
\begin{equation*} \tilde \BP_N\{\tau_n< T-s\} = \frac{\alpha}{F(T)}\BP\{ \tau_n<T-s\}  = \frac{\alpha}{F(T)}F((T-s)-)\ge \frac{\alpha}{F(T)}F((T-T^*)-)>0; \end{equation*}
thus the expressions on the right of \eqref{E:condprob} are well-defined.
Proceeding, we compute that
\begin{equation*} \tilde \BE_N[\Zn_t|\gilt_s] = \frac{\tilde \BP_N\{\tau_n< T-t\}}{\tilde \BP_N\{\tau_n< T-s\}}\Zn_s \end{equation*}
and hence\footnote{Under normalization, $\mu$ and $\tilde \mu^*_\alpha$ agree
on $\Borel[0,T)$.}
\begin{equation}\label{E:LZ} \tilde \BE_N[\Zn_t|\gilt_s]-\Zn_s = \frac{\tilde \BP_N\{\tau_n< T-t\}-\tilde \BP_N\{\tau_n< T-s\}}{\tilde \BP_N\{\tau_n< T-s\}}\Zn_s =\frac{F((T-t)-)-F((T-s)-)}{F((T-s)-)}\Zn_s.\end{equation}

Again fix $s$ and $t$ in $[0,T^*]$ such that $s\le t$. For each positive integer $m$, define $r^m_k \Def s+(k/m)(t-s)$ for $k\in \{0,1\dots m\}$.  Using \eqref{E:LZ}, we can write that $\Zn_t-\Zn_s = \CA_m+\CM_m$ where
\begin{equation*} \CA_m = \sum_{k=0}^{m-1}\lb \tilde \BE_n\left[\Zn_{r^m_{k+1}}\big|\gilt_{r^m_k}\right]-\Zn_{r^m_k}\rb \qquad \text{and}\qquad
\CM_m = \sum_{k=0}^{m-1}\lb \Zn_{r^m_{k+1}}-\tilde \BE_n\left[\Zn_{r^m_{k+1}}\big|\gilt_{r^m_k}\right]\rb. \end{equation*}
For $C\in \gilt_s$,
\begin{equation}\label{E:miss} \tilde \BE_N\left[\lb \Zn_t-\Zn_s-\CA_m\rb \chi_C\right]
=\tilde \BE_N\left[\CM_m\chi_C\right]=0. \end{equation}

We now need to show that $\tilde \BP_N$-a.s.,
\begin{equation}\label{E:ggoal} \lim_{m\nearrow \infty}\CA_m = -\{A^{(n)}_t-A^{(n)}_s\} \end{equation}
This will require a bit of care.  We first rewrite $\CA_m$ as a integral;
\begin{equation*} \CA_m = -\sum_{k=0}^{m-1}\int_{r\in [T-r^m_{k+1},T-r^m_k)}\frac{1}{F((T-r^m_k)-)}\Zn_{r^m_k}dF_r = \int_{r\in [T-t,T-s)}\phi^m(r,\tau_n)dF_r \end{equation*}
where
\begin{equation*} \phi^m(r,t') \Def \sum_{k=0}^{m-1}\chi_{[T-r^m_{k+1},T-r^m_k)}(r)
\frac{1}{F((T-r^m_k)-)}\chi_{\{t'< T-r^m_k\}} \end{equation*}
for all $r\in [T-t,T-s)$ and $t'\in I$.
Defining
\begin{equation*} \tilde \phi(r,t') \Def \frac{1}{F(r-)}\chi_{(0,r)}(t') \end{equation*}
for all $r\in [T-t,T-s]$ and $t'\in I$, we thus have that
\begin{equation*} \phi^m(r,t') =\sum_{k=0}^{m-1}\chi_{[T-r^m_{k+1},T-r^m_k)}(r)\tilde \phi(T-r^m_k,t') \end{equation*}
for all $r\in [T-t,T-s)$ and $t'\in I$.  For $r\in [T-t,T-s)$, $F(r-)\ge F((T-t)-)\ge F((T-T^*)-)>0$; thus $\tilde \phi$ and the $\phi^m$'s are all uniformly bounded.  It is fairly easy to see that 
\begin{equation*} \lim_{m\to \infty}\phi^m(r,t') = \frac{1}{F(r)}\chi_{(0,r]}(t') \end{equation*}
for all $r\in [T-t,T-s)$ and all $t'\in I$.  Thus by dominated convergence,
\begin{equation*} \lim_{m\to \infty} \CA_m = -\int_{r\in [T-t,T-s)}\frac{1}{F(r)}\chi_{(0,r]}(\tau_n)dF_r, \end{equation*}
and \eqref{E:ggoal} follows.

Taking the limit in \eqref{E:miss}, we now have that
\begin{equation*} \tilde \BE_N\left[\lb (\Zn_t-A_t(\tau_n))-(\Zn_s-A_s(\tau_n))\rb \chi_C\right]=0, \end{equation*}
which is the martingale property.
Finally, since $M^{(n)}_0=0$, we have that $M^{(n)}$ is zero-mean.
\end{proof}

Let's now recombine things.  Set
\begin{equation*} \tilde M^{(N)}_t \Def \frac{1}{N}\sum_{n=1}^N M^{(n)}_t \qquad \text{and}\qquad \tilde A^{(N)}_t \Def \frac{1}{N}\sum_{n=1}^N A_t(\tau_n) \end{equation*}
for $t\in [0,T)$.  Note that
\begin{equation}\label{E:gmeas} L^{(N)}_{T-} = \frac{1}{N}\sum_{n=1}^N \Zn_0. \end{equation}
We next rewrite $\tau^\alpha_N$ as a stopping time with respect to $\{\gilt_t;\, t\in [0,T)\}$.  Set
\begin{equation*} \vrho^\alpha_N \Def \inf\lb t\in [0,T): L^{(N)}_{(T-t)-}\le \frac{\grandn}{N}\rb\wedge T = \inf\lb t\in [0,T): \frac{1}{N}\sum_{n=1}^N \Zn_t\le \frac{\grandn}{N}\rb\wedge T; \end{equation*}
then\footnote{If $L^{(N)}_0>\alpha$, then $\vrho^\alpha_N=T$ and $\tau^\alpha_N=0$.
Assume next that $L^{(N)}_0\le \alpha$.  Since $L^{(N)}$ is piecewise-constant and right-continuous,
we must have that $\tau^\alpha_N>0$.  At time $\tau^\alpha_N$, we have that
$L^{(N)}_{\tau^\alpha_N}>\alpha$ and $L^{(N)}_{\tau^\alpha_N-}\le \alpha$;
see Figure \ref{fig:typ}.  Since $L^{(N)}$ takes values only in $\Z/N$, we
have that $L^{(N)}_{\tau^\alpha_N-}\le \tfrac{\lfloor N\alpha\rfloor}{N}$
and $L^{(N)}_{\tau^\alpha_N}\ge \tfrac{\lfloor N\alpha\rfloor+1}{N}$.
Thus $\rho^\alpha_N=T-\tau^\alpha_N$, as claimed.} $\vrho^\alpha_N=T-\tau^\alpha_N$.  Furthermore, $\vrho^\alpha_N$ is a $\{\gilt_t;\, t\in [0,T)\}$-stopping time.

It will help to truncate $\vrho^\alpha_N$ at $T^*$; set
$\tilde \vrho^\alpha_N \Def \vrho^\alpha_N\wedge T^*$; this is also a $\{\gilt_t;\, t\in [0,T)\}$-stopping time and $\tilde \vrho^\alpha_N\le T^*$.
Thus
\begin{equation*} L^{(N)}_{(T-\tilde \vrho^\alpha_N)-} = L^{(N)}_{T-} + \tilde A^{(N)}_{\tilde \vrho^\alpha_N} + \tilde M^{(N)}_{\tilde \vrho^\alpha_N}. \end{equation*}
If $\gamma_N>0$, then $L^{(N)}_{T-}>\alpha$, and since $\tilde \vrho^\alpha_N\le \vrho^\alpha_N$, we have that $L^{(N)}_{(T-\tilde \vrho^\alpha_N)-}\ge \frac{\grandn}{N}$ and consequently
\begin{multline*} - \tilde A^{(N)}_{\tilde \vrho^\alpha_N} = L^{(N)}_{T-} - L^{(N)}_{(T-\tilde \vrho^\alpha_N)-} + \tilde M^{(N)}_{\tilde \vrho^\alpha_N}
\le L^{(N)}_{T-} - \frac{\grandn}{N} + \tilde M^{(N)}_{\tilde \vrho^\alpha_N}
\le L^{(N)}_{T-} - \alpha + \frac{1}{N}+ |\tilde M^{(N)}_{\tilde \vrho^\alpha_N}|\\
\le \gamma_N + \frac{1}{N}+ |\tilde M^{(N)}_{\tilde \vrho^\alpha_N}|. \end{multline*}

Let's now use the fact that $\frac{1}{N}\sum_{n=1}^N \Zn_{\tilde \vrho^\alpha_N}\ge \frac{\grandn}{N}$ to bound $\tilde A^{(N)}_{\tilde \vrho^\alpha_N}$.  As we pointed out in the proof of Lemma \ref{L:wmx}, the $\Zn$'s are nonincreasing.  Also, $F\le 1$.  Thus for $N\ge 2/\alpha$ (which implies that $\grandn/N\ge \alpha/2$),
we have the following string of inequalities.
\begin{multline*} -\tilde A^{(N)}_{\tilde \vrho^\alpha_N} 
=\int_{r\in [T-t,T)}\frac{1}{F(r-)}\left(\frac{1}{N}\sum_{n=1}^N \Zn_{(T-r)-}\right)dF_r 
\ge \left(\frac{1}{N}\sum_{n=1}^N \Zn_{\vrho^\alpha_N-}\right)\int_{r\in [T-\tilde \vrho^\alpha_N,T)}dF_r\\
\ge \left(\frac{1}{N}\sum_{n=1}^N \Zn_{\vrho^\alpha_N}\right)\lb F(T-)-F((T-\tilde \vrho^\alpha_N)-)\rb \ge \frac{\alpha}{2}\ff(\tilde \rho^\alpha_N) \end{multline*}
where we have defined
\begin{equation*} \ff(\eps) \Def F(T-)-F((T-\eps)-) \end{equation*}
for all $\eps\in (0,T)$.  Note that $\lim_{\eps \searrow 0}\ff(\eps)=0$, and, thanks to Assumption \ref{A:density}, $\ff(\eps)>0$ for $\eps\in (0,T)$.
Thus
\begin{equation} \label{E:CW} \ff\left(\tilde \vrho^\alpha_N\right)\chi_{\{\gamma_N>0\}}\le \frac{2}{\alpha}\lb \gamma_N + \frac{1}{N} + \left|\tilde M^{(N)}_{\tilde \vrho^\alpha_N}\right|\rb\chi_{\{\gamma_N>0\}}
\le \frac{2}{\alpha}\lb \gamma_N^+ + \frac{1}{N} + \left|\tilde M^{(N)}_{\tilde \vrho^\alpha_N}\right|\rb\end{equation}
if $N>2/\alpha$.
\begin{proof}[Proof of Lemma \ref{L:TTimes}]  We begin by taking conditional expectations of \eqref{E:CW}.  Note that $\gamma_N$ is $\gilt_0$-measurable (see \eqref{E:gmeas}).
We have
\begin{equation*} \tilde \BE_N\left[\ff\left(\tilde \vrho^\alpha_N\right)\big|\gilt_0\right]\chi_{\{\gamma_N>0\}} \le \frac{2}{\alpha}\lb \gamma_N^+ + \frac{1}{N} + \tilde \BE_N\left[\left|\tilde M^{(N)}_{\tilde \vrho^\alpha_N}\right|\bigg| \gilt_0\right]\rb . \end{equation*}
By Jensen's inequality,
\begin{equation*} \tilde \BE_N\left[\left|\tilde M^{(N)}_{\tilde \vrho^\alpha_N}\right|\bigg| \gilt_0\right]\le \tilde \BE_N\left[\left(\tilde M^{(N)}_{\tilde \vrho^\alpha_N}\right)^2\bigg| \gilt_0\right]^{1/2} \end{equation*}
$\tilde \BP_N$-a.s.
We can now use optional sampling;
\begin{multline*} \tilde \BE_N\left[\left(\tilde M^{(N)}_{\tilde \vrho^\alpha_N}\right)^2\bigg| \gilt_0\right] \le \tilde \BE_N\left[\left(\tilde M^{(N)}_{T_2^*}\right)^2\bigg| \gilt_0\right]
= \frac{1}{N^2}\sum_{n=1}^N \tilde \BE_N\left[\left(M^{(n)}_{T_2^*}\right)^2\bigg| \gilt_0\right]\\
\le \frac{3}{N^2}\sum_{n=1}^N \lb \tilde \BE_N\left[\left(\Zn_{T_2^*}\right)^2\bigg| \gilt_0\right]+\tilde \BE_N\left[\chi^2_{\{\tau_n<T\}}\bigg| \gilt_0\right]
+\tilde \BE_N\left[\left(A^{(n)}_{T^*}\right)^2\bigg| \gilt_0\right]\rb\\
\le \frac{3}{N}\lb 2 + \frac{1}{F^2((T-T^*)-)}\rb  \end{multline*}
$\tilde \BP_N$-a.s.
We have used here the fact that the $M^{(n)}$'s are independent, the explicit
formula for $M^{(n)}$, and \eqref{E:Abound}.
Summarizing thus far, we have that
\begin{equation*}  \tilde \BE_N\left[\ff\left(\tilde \vrho^\alpha_N\right)\big|\gilt_0\right]\chi_{\{\gamma_N>0\}}\le \frac{2}{\alpha}\lb \left(L^{(N)}_{T-}-\alpha\right)^+ + \frac{1}{N}  + \sqrt{\frac{3}{N}\lb 2 + \frac{1}{F^2((T-T^*)-)}\rb }\rb \end{equation*}
$\tilde \BP_N$-a.s.  As we pointed out earlier, $\sigma\{\gamma_N\} = \sigma\{L^{(N)}_{T-}\} \subset \gilt_0$, so by iterated conditioning, we next have that
\begin{equation*}  \tilde \BE_N\left[\ff\left(\tilde \vrho^\alpha_N\right)\big|\gamma_N\right]\chi_{\{\gamma_N>0\}}\le \frac{2}{\alpha}\lb \gamma_N^+ + \frac{1}{N}  + \sqrt{\frac{3}{N}\lb 2 + \frac{1}{F^2((T-T^*)-)}\rb}\rb \end{equation*}

Fix now $\eps\in (0,T^*)$.  If $\tilde \vrho^\alpha_N<\eps$, then in fact $T-\tau^\alpha_N = \tilde \vrho^\alpha_N<\eps$.  On the other hand, if $\tilde \vrho^\alpha_N\ge \eps$, then 
\begin{equation*} \gamma_N = L^{(N)}_{T-}-\alpha \ge \frac{\grandn + 1}{N}-\alpha >0; \end{equation*}
thus
\begin{equation*} \chi_{\{\tilde \vrho^\alpha_N>\eps\}} \le \frac{1}{\ff(\eps)}\ff(\tilde \vrho^\alpha_N)\chi_{\{\gamma_N>0\}}. \end{equation*}
Hence
\begin{multline*} \tilde \BE_N\left[T-\tau^\alpha_N\bigg|\gamma_N\right]
\le \eps + T \frac{\tilde \BE_N\left[\ff\left(\tilde \vrho^\alpha_N\right)\chi_{\{\gamma_N>0\}}\bigg| L^{(N)}_{T-}\right]}{\ff(\eps)} \\
\le \eps + \frac{2T}{\alpha\ff(\eps)}\lb \gamma_N^+ + \frac{1}{N} + \sqrt{\frac{3}{N}\lb 2 + \frac{1}{F^2((T-T^*)-)}\rb}\rb \end{multline*}
$\tilde \BP_N$-a.s. 
In other words,
\begin{equation*} \sup_{\substack{s\in \CS_N \\ s\le N^{1/4}}}\tilde \BE_N\left[T-\tau^\alpha_N\big| \gamma_N\right]\chi_{\{\gamma_N=s\}}\le 
\eps + \frac{2T}{\alpha\ff(\eps)}\lb \frac{1}{N^{3/4}} + \frac{1}{N} + \sqrt{\frac{3}{N}\lb 2 + \frac{1}{F^2((T-T^*)-)}\rb }\rb. \end{equation*}
Let $N\nearrow \infty$ and then let $\eps\searrow 0$.
\end{proof}

\section{Proofs}\label{S:Proofs}

We here give the deferred proofs.

\begin{proof}[Proof of Lemma \ref{L:probasymp}]  To begin, recall Stirling's formula.  Let $\tilde \Err_1:(-1,\infty)\to \R$ be defined by
\begin{equation*} \Gamma(x+1) \Def \int_{u=0}^\infty u^x e^{-u}du = \left(\frac{x}{e}\right)^x\sqrt{2\pi x}\lb 1+\tilde \Err_1(x)\rb; \end{equation*}
for all $x>-1$; then $\lim_{x\to \infty}|\tilde \Err_1(x)|=0$.
Then for any $s=n-N\alpha\in \CS_N$,
\begin{multline*} \tilde \BP_N\lb \gamma_N=s\rb
= \tilde \BP_N\lb \text{$n$ of the $\tau$'s are in $[0,T)$ and $N-n$ are in $[T,\infty)$}\rb \\
= \binom{N}{n}\alpha^n (1-\alpha)^{N-n}
= \frac{\Gamma(N+1)}{\Gamma(N\alpha+s+1)\Gamma(N(1-\alpha)-s+1)}\alpha^{N\alpha+s}(1-\alpha)^{N(1-\alpha)-s}
= A(N)B(s,N) \end{multline*}
where
\begin{align*} A(N) &=  \frac{\Gamma(N+1)}{\Gamma(N\alpha+1)\Gamma(N(1-\alpha)+1)}\alpha^{N\alpha}(1-\alpha)^{N(1-\alpha)}\\
B(s,N) &= \frac{\Gamma(N\alpha+1)\Gamma(N(1-\alpha)+1)}{\Gamma(N\alpha+s+1)\Gamma(N(1-\alpha)-s+1)}\alpha^s(1-\alpha)^{-s}. \end{align*}
Let's now use Stirling's formula.
We have
\begin{align*} A(N)&= \frac{\left(\frac{N}{e}\right)^N}{\left(\frac{N\alpha}{e}\right)^{N\alpha}\left(\frac{N(1-\alpha)}{e}\right)^{N(1-\alpha)}}\frac{\sqrt{2\pi N}}{\sqrt{2\pi N\alpha}\sqrt{2\pi N(1-\alpha)}}\alpha^{N\alpha}(1-\alpha)^{N(1-\alpha)} \\
&\qquad \times \frac{1+\tilde \Err_1(N)}{\{1+\tilde \Err_1(N\alpha)\}\{1+\tilde \Err_1(N(1-\alpha))\}} \\
&=\frac{1}{\sqrt{2\pi N\alpha(1-\alpha)}}\frac{1+\tilde \Err_1(N)}{\{1+\tilde \Err_1(N\alpha)\}\{1+\tilde \Err_1(N(1-\alpha))\}} \end{align*}
Thus
\begin{equation*} A(N)=\frac{1+\tilde \Err_2(N)}{\sqrt{2\pi N\alpha(1-\alpha)}}. \end{equation*}
where $\lim_{N\to \infty}\tilde \Err_2(N)=0$.
To find the asymptotics of $B$, we first let $\tilde \Err_3:(-1,\infty)$ be such that
\begin{equation*} \ln (1+x) = x+\tilde \Err_3(x). \qquad x>-1 \end{equation*}
Then there is a $\KK>0$ such that $|\tilde \Err_3(x)|\le \KK x^2$ for all $x\in (-1/2,1/2)$.
Again using Stirling's formula, we have that
\begin{align*} B(s,N) &= \frac{\left(\frac{N\alpha}{e}\right)^{N\alpha}}{\left(\frac{N\alpha+s}{e}\right)^{N\alpha+s}}\frac{\left(\frac{N(1-\alpha)}{e}\right)^{N(1-\alpha)}}{\left(\frac{N(1-\alpha)-s}{e}\right)^{N(1-\alpha)-s}}\sqrt{\left(\frac{N\alpha}{N\alpha+s}\right)\left(\frac{N(1-\alpha)}{N(1-\alpha)-s}\right)}\alpha^s (1-\alpha)^{-s}\\
&\qquad \times \frac{1+\tilde \Err_1(N\alpha)}{\lb 1+\tilde \Err_1(N\alpha+s)\rb\lb 1+\tilde \Err_1(N(1-\alpha)-s)\rb}\\
      &= \left(\frac{\alpha}{\alpha+\frac{s}{N}}\right)^{N\alpha+s}\left(\frac{1-\alpha}{1-\alpha-\frac{s}{N}}\right)^{N\alpha-s}\frac{1}{\sqrt{\left(1+\frac{s}{\alpha N}\right)\left(1-\frac{s}{N(1-\alpha)}\right)}}\\
&\qquad \times \frac{1+\tilde \Err_1(N\alpha)}{\lb 1+\tilde \Err_1(N\alpha+s)\rb\lb 1+\tilde \Err_1(N(1-\alpha)-s)\rb}\\
& = \frac{1}{\left(1+\frac{s}{N\alpha}\right)^{N\alpha+s}\left(1-\frac{s}{N(1-\alpha)}\right)^{N(1-\alpha)-s}} \frac{1}{\sqrt{\left(1+\frac{s}{\alpha N}\right)\left(1-\frac{s}{N(1-\alpha)}\right)}}\\
&\qquad \times \frac{1+\tilde \Err_1(N\alpha)}{\lb 1+\tilde \Err_1(N\alpha+s)\rb\lb 1+\tilde \Err_1(N(1-\alpha)-s)\rb}\\
&= \exp\left[-(N\alpha+s)\left(\frac{s}{N\alpha} + \tilde \Err_2\left(\frac{s}{N\alpha}\right)\right)- (N(1-\alpha)-s)\left(-\frac{s}{N(1-\alpha)} + \tilde \Err_2\left(\frac{-s}{N(1-\alpha)}\right)\right)\right]\\
&\qquad \times \frac{1}{\sqrt{\left(1+\frac{s}{\alpha N}\right)\left(1-\frac{s}{N(1-\alpha)}\right)}}\times \frac{1+\tilde \Err_1(N\alpha)}{\lb 1+\tilde \Err_1(N\alpha+s)\rb\lb 1+\tilde \Err_1(N(1-\alpha)-s)\rb} \\
&= \exp\left[-\tilde \Err_4(s,N)\right]\frac{1}{\sqrt{\left(1+\frac{s}{\alpha N}\right)\left(1-\frac{s}{N(1-\alpha)}\right)}}\\
&\qquad \times \frac{1+\tilde \Err_1(N\alpha)}{\lb 1+\tilde \Err_1(N\alpha+s)\rb\lb 1+\tilde \Err_1(N(1-\alpha)-s)\rb} \end{align*}
where
\begin{equation*} \tilde \Err_4(s,N) \Def \frac{s^2}{N}\left(\frac{1}{\alpha}+\frac{1}{1-\alpha}\right) + (N\alpha+s)\tilde \Err_3\left(\frac{s}{N\alpha}\right)+(N(1-\alpha)-s)\tilde \Err_3\left(-\frac{s}{N(1-\alpha)}\right). \end{equation*}

Let's now combine things together.  We have that
\begin{equation*} \tilde \BP_N\{\gamma_N=s\}-\frac{1}{\sqrt{2\pi N\alpha(1-\alpha)}} = \frac{\tilde \Err_5(s,N)-1}{\sqrt{2\pi N\alpha(1-\alpha)}} \end{equation*}
where
\begin{align*} \tilde \Err_5(s,N)&=\frac{1+\tilde \Err_1(N)}{\{1+\tilde \Err_1(N\alpha)\}\{1+\tilde \Err_1(N(1-\alpha))\}}\exp\left[-\tilde \Err_4(s,N)\right]\frac{1}{\sqrt{\left(1+\frac{s}{\alpha N}\right)\left(1-\frac{s}{N(1-\alpha)}\right)}}\\
&\qquad \times\frac{1+\tilde \Err_1(N\alpha)}{\lb 1+\tilde \Err_1(N\alpha+s)\rb\lb 1+\tilde \Err_1(N(1-\alpha)-s)\rb}. \end{align*}
Note that if $|s|\le N^{1/4}$, then
\begin{equation*} \left|\frac{s}{N\alpha}\right|\le \frac{1}{\alpha N^{3/4}} \qquad \text{and}\qquad \left|\frac{s}{N(1-\alpha)}\right|\le \frac{1}{(1-\alpha) N^{3/4}}. \end{equation*}
Thus if $|s|\le N^{1/4}$, then for $N$ large enough
\begin{equation*} |\tilde \Err_4(s,N)| \le \frac{K}{\sqrt{N}}. \end{equation*}
The claimed statement follows.
\end{proof}

Let's now start to prove Proposition \ref{P:explicit}.
First, define
\begin{equation*} \fI_\circ(\alpha') \Def \inf\lb H(\tilde \mu|\mu): \tilde \mu\in \Pspace(I), \tilde \mu[0,T)=\alpha'\rb \end{equation*}
for all $\alpha'\in [0,1]$.
\begin{lemma}\label{L:hear} We have that
\begin{equation*} \fI_\circ(\alpha') = \hbar(\alpha',F(T-))=H(\tilde \mu^*_{\alpha'}|\mu), \end{equation*}
where $\tilde \mu_{\alpha'}^*$ is given by \eqref{E:dacc}.
\end{lemma}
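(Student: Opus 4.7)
The plan is to establish two things: that $\tilde\mu^*_{\alpha'}$ is feasible and achieves the value $\hbar(\alpha',F(T-))$, and that no feasible $\tilde\mu$ can do better. The setup is clean because $\tilde\mu^*_{\alpha'}$ has a step-function density with respect to $\mu$, so the relevant relative entropy is entirely determined by how much mass each measure places on $[0,T)$ versus $[T,\infty]$.

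First I would directly compute $H(\tilde\mu^*_{\alpha'}|\mu)$. From \eqref{E:dacc}, $\tilde\mu^*_{\alpha'}\ll\mu$ with
\begin{equation*}
\frac{d\tilde\mu^*_{\alpha'}}{d\mu}(t)=\frac{\alpha'}{F(T-)}\chi_{[0,T)}(t)+\frac{1-\alpha'}{1-F(T-)}\chi_{[T,\infty]}(t),
\end{equation*}
and in particular $\tilde\mu^*_{\alpha'}[0,T)=\alpha'$, so $\tilde\mu^*_{\alpha'}$ is feasible. Integrating $\ln(d\tilde\mu^*_{\alpha'}/d\mu)$ against $\tilde\mu^*_{\alpha'}$ collapses to the two-term sum $\alpha'\ln(\alpha'/F(T-))+(1-\alpha')\ln((1-\alpha')/(1-F(T-)))=\hbar(\alpha',F(T-))$, with the degenerate cases $\alpha'\in\{0,1\}$ matching the piecewise definition of $\hbar$ and using Assumption~\ref{A:IG} to ensure $F(T-)\in(0,1)$.

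For the lower bound $H(\tilde\mu|\mu)\ge\hbar(\alpha',F(T-))$ over all $\tilde\mu\in\Pspace(I)$ with $\tilde\mu[0,T)=\alpha'$, I would invoke the variational representation \eqref{E:dual} (allowing bounded measurable test functions, as the excerpt already notes is permitted), and test against
\begin{equation*}
\phi^*_{\alpha'}(t)\Def\ln\frac{\alpha'}{F(T-)}\chi_{[0,T)}(t)+\ln\frac{1-\alpha'}{1-F(T-)}\chi_{[T,\infty]}(t).
\end{equation*}
Because $\int e^{\phi^*_{\alpha'}}d\mu=\frac{\alpha'}{F(T-)}F(T-)+\frac{1-\alpha'}{1-F(T-)}(1-F(T-))=1$, the log-MGF term vanishes, while $\int\phi^*_{\alpha'}d\tilde\mu=\hbar(\alpha',F(T-))$ using only the constraint $\tilde\mu[0,T)=\alpha'$. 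This is the same identity that makes $\phi^*_{\alpha'}$ the extremal in Section~\ref{S:Tilt}, so nothing new needs to be proved; one simply observes that it works for every feasible $\tilde\mu$, not just $\tilde\mu^*_{\alpha'}$.

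Combining the two gives $\fI_\circ(\alpha')=\hbar(\alpha',F(T-))=H(\tilde\mu^*_{\alpha'}|\mu)$, with $\tilde\mu^*_{\alpha'}$ a minimizer. The only technical nuisance is handling $\alpha'\in\{0,1\}$ cleanly, since then one of the two terms in $\tilde\mu^*_{\alpha'}$ vanishes and $\phi^*_{\alpha'}$ takes the value $-\infty$ on a $\tilde\mu$-null set; the convention $0\ln 0=0$ and the piecewise definition of $\hbar$ are arranged precisely to cover this, so the argument goes through without modification. An equivalent route via the data-processing inequality applied to the partition $\{[0,T),[T,\infty]\}$ yields the same bound by reducing to the Bernoulli case, and I would mention it as a sanity check.
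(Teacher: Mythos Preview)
Your argument is correct and takes a genuinely different route from the paper. The paper proves the lower bound by applying Jensen's inequality for the convex function $f(x)=x\ln x$: it splits the integral $\int f(d\mu'/d\mu)\,d\mu$ over $[0,T)$ and $[T,\infty]$, normalizes each piece to a probability measure, and applies Jensen separately, obtaining $H(\mu'|\mu)\ge\hbar(\mu'[0,T),\mu[0,T))$ with equality exactly when $d\mu'/d\mu$ is constant on each piece. Your approach instead exploits the Donsker--Varadhan variational formula \eqref{E:dual}, testing against the explicit $\phi^*_{\alpha'}$; since $\int e^{\phi^*_{\alpha'}}\,d\mu=1$, the bound drops out in one line. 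What the paper's route buys is self-containment (it does not lean on the extension of \eqref{E:dual} from $C(I)$ to bounded measurable test functions, which the text asserts but does not prove) and an immediate equality characterization giving uniqueness of the minimizer. What your route buys is a direct tie-in to the tilting machinery of Section~\ref{S:Tilt}, making it transparent why $\phi^*_{\alpha'}$ is the right object; the cost is that the endpoint cases $\alpha'\in\{0,1\}$ need a limiting argument since $\phi^*_{\alpha'}$ is then unbounded, whereas Jensen handles them uniformly. Your sanity-check via the data-processing inequality for the partition $\{[0,T),[T,\infty]\}$ is in fact precisely the paper's Jensen argument in disguise.
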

\begin{proof}
Fix $\mu'\in \Pspace(I)$ such that $\mu'[0,T)=\alpha'$.
If $\mu'$ is not absolutely continuous
with respect to $\mu$, then $H(\mu'|\mu)=\infty$; thus we assume
that $\mu'$ is absolutely continuous with respect to $\mu$.
Define
\begin{equation*} f(x) \Def \begin{cases} x \ln x &\text{for $x>0$} \\
0 &\text{if $x=0$.} \end{cases}\end{equation*}
Then $f$ is convex on $[0,\infty)$.  Recall that Assumptions \ref{A:density} and \ref{A:IG} imply that $F(T)\in (0,1)$.  Thus we can write (using Jensen's inequality) that
\begin{multline*} H(\mu'|\mu) = \int_{t\in I}f\left(\frac{d\mu'}{d\mu}(t)\right)\mu(dt)\\
=\mu[0,T)\int_{t\in [0,T)}f\left(\frac{d\mu'}{d\mu}(t)\right)\frac{\mu(dt)}{\mu[0,T)} + \mu[T,\infty]\int_{t\in [T,\infty]}f\left(\frac{d\mu'}{d\mu}(t)\right)\frac{\mu(dt)}{\mu[T,\infty]} \\
\ge \mu[0,T)f\left(\int_{t\in [0,T)}\frac{d\mu'}{d\mu}(t)\frac{\mu(dt)}{\mu[0,T)}\right) + \mu[T,\infty]f\left(\int_{t\in [T,\infty]}\frac{d\mu'}{d\mu}(t)\frac{\mu(dt)}{\mu[T,\infty]}\right) \\
= \mu[0,T)f\left(\frac{\mu'[0,T)}{\mu[0,T)}\right) + \mu[T,\infty]f\left(\frac{\mu'[T,\infty]}{\mu[T,\infty]}\right) = \hbar(\mu'[0,T),\mu[0,T)). \end{multline*}
We have equality here if and only if $\mu$-a.s.
\begin{equation*} \frac{d\mu'}{d\mu}(t) = C_1\chi_{[0,T)}(t) + C_2\chi_{[T,\infty]}(t), \end{equation*}
which holds if and only if $\mu' = \tilde \mu^*_{\alpha'}$.  Collecting things together,
we have the claimed result.\end{proof}

\begin{proof}[Proof of Proposition \ref{P:explicit}]
In light of Lemma \ref{L:hear}, we need to show that
\begin{equation} \label{E:minc} \inf_{\alpha'\ge \alpha}\fI_\circ(\alpha')=\fI(\alpha). \end{equation}
To do this, we observe from that $\fI_\circ$ is differentiable and that
\begin{equation*} \fI_\circ'(\alpha') = \ln \left(\frac{\alpha'}{1-\alpha'}\frac{1-F(T-)}{F(T-)}\right)=\ln \left(1+\frac{\alpha'-F(T-)}{F(T-)(1-\alpha')}\right) \end{equation*}
for all $\alpha'\in (0,1)$.   Thus $\fI'_\circ(\alpha')>0$ if $\alpha'>F(T-)$.  Assumption \ref{A:IG} then gives us \eqref{E:minc}.  Combining our arguments, we get the claimed result.
\end{proof}

\section{Appendix: Measurability}
We here verify that $\Prot_N$ is measurable.
Let $D_+$ be the collection of nondecreasing functions $\phi:\R\to [0,1]$ which are right-continuous and have left-hand limits and for which $\phi(t)=0$ for $t<0$.
For $\mu'\in \Pspace(I)$ and $t\ge 0$, define $\iota(\mu')(t) \Def \mu'[0,t]$ for $t\ge 0$ and $\iota(\mu')(t)=0$ for $t<0$.  In particular, $L^{(N)}=\iota(\nu^{(N)})$.  It is clear that $\iota:\Pspace(I)\to D_+$ and is a bijection (note that $\mu'\{\infty\} = 1- \lim_{t\nearrow \infty}\iota(\mu')(t)$; this
allows us to recover $\mu'\{\infty\}$ when writing down the inverse of $\iota$).  We can then topologize $D_+$ by pushing the topology of $\Pspace(I)$ forward through $\iota$; thus $\iota$ is continuous.   We also note that $\{\ell_n\}_{n=1}^\infty \subset D_+$ converges to $\ell\in D_+$ if and only if $\lim_{n\to \infty}\ell_n(t)=\ell(t)$
for all points $t\in [0,\infty)$ at which $\ell$ is continuous.  Thus $L^{(N)}$ is a $D_+$-valued random variable.  We next define
$\Phi^\circ_1:D_+\to D_+$ as 
\begin{equation*} \Phi^\circ_1(\phi)(t) \Def \frac{(\phi(t)-\alpha)^+-(\phi(t)-\beta)^+}{\beta-\alpha} \qquad t\in \R \end{equation*}
for all $\phi\in D_+$.  Thus $\tL^{(N)}=\Phi^\circ_1(L^{(N)})$.  By the above characterization of convergence in $D_+$,
we see that $\Phi^\circ_1$ is continuous; thus $\tL^{(N)}$ is also a $D_+$-valued
random variable.  Finally, define $\Phi^\circ_2 \Def D_+\to \R$ as 
\begin{equation*} \Phi^\circ_2(\ell) \Def \int_{s\in [0,T)} e^{-\rate s}d\ell(s) = e^{-\rate T}\ell(T-)+ \rate \int_{s\in (0,T)} e^{-\rate s}\ell(s)ds \end{equation*}
for all $\ell\in D_+$ (we define the $d\ell$ integral as a Lebesgue-Stieltjes
integral).  Then $\Prot_N = \Phi^\circ_2(\tL^{(N)})$.  We claim that
$\Prot_N$ is measurable (from $D_+$ to $\R$).
Let $\zeta\in C^\infty(\R;[0,1])$ be such that $\zeta(t)=1$
for $t\le -1$ and $\zeta(t)=0$ for $t\ge 0$.  For each positive integer $n$
and each $\ell\in D_+$, set
\begin{equation*} \tilde \Phi^n_2(\ell) \Def \int_{s\in \R} e^{-\rate s}\zeta(n(s-T))d\ell(s) = \int_{s=0}^\infty e^{-\rate s}\lb \rate \zeta(n(s-T))- n\dot \zeta(n(s-T))\rb \ell(s)ds. \end{equation*}
Clearly $\tilde \Phi^n_2:D_+\to \R$ is continuous.  Furthermore, by dominated convergence, $\lim_{n\to \infty}\tilde \Phi^n_2(\ell) = \Phi^\circ_2(\ell)$ for each $\ell$ (i.e., pointwise on $D_+$).  Being the pointwise limit of continuous functions, $\Phi^\circ_2$ is thus measurable.

Combining all of these arguments, we conclude that $\Prot_N$ is indeed a $\R$-valued random variable.  Clearly $\tL^{(N)}_T\le 1$, so $0\le\Prot_N\le 1$.

\bibliographystyle{alpha}
\def\cprime{$'$} \def\cprime{$'$} \def\cprime{$'$} \def\cprime{$'$}
  \def\polhk#1{\setbox0=\hbox{#1}{\ooalign{\hidewidth
  \lower1.5ex\hbox{`}\hidewidth\crcr\unhbox0}}}

\end{document}